\newtheoremstyle{mythm}{3pt}{3pt}{}{16pt}{\bfseries}{:}{.5em}{}
\theoremstyle{mythm}
\newtheorem{theorem}{Theorem}
\newtheorem{example}{Example}
\newtheorem{definition}{Definition}
\newtheorem{remark}{Remark}
\newtheorem{proposition}{Proposition}
\newtheorem{corollary}{Corollary}
\newtheorem{lemma}{Lemma}
\newtheorem{construction}{Construction}
\begin{document}
	\title{Fundamental Limits of Coded Caching with Fixed Subpacketization}
\author{Minquan Cheng, Yifei Huang,  Youlong Wu,  and Jinyan Wang
	\thanks{This work was presented in part at the 2022 IEEE International Symposium on Modeling and Optimization in Mobile, Ad hoc, and Wireless Networks (WiOpt), in Torino, Italy, Sept 2022\cite{WiOpt'Cheng}. }
	\thanks{M. Cheng, Y. Huang and J. Wang are with the Key Lab of Education Blockchain and Intelligent Technology, Ministry of Education, and also with the Guangxi Key Lab of Multi-source Information Mining $\&$ Security, Guangxi Normal University, 541004 Guilin, China (e-mail: chengqinshi@hotmail.com, huangyifei59@163.com, wangjy612@gxnu.edu.cn). Y. Huang is also with the School of Science, Guilin University of Aerospace Technology, 541004 Guilin, China.
	}
	\thanks{Y. Wu is with the School of Information Science and Technology, ShanghaiTech University, 201210 Shanghai, China (e-mail: wuyl1@shanghaitech.edu.cn).}
}
	
	\maketitle
	
	\begin{abstract}
	Coded caching is a promising technique to create coded multicast opportunities for cache-aided networks. By splitting each file into $F$ equal packets (i.e., the subpacketization level $F$) and letting each user cache a set of packets, the transmission load can be significantly reduced via coded multicasting.  It has been shown that a higher subpacketization level could potentially lead to a lower transmission load, as more packets can be combined for efficient transmission. On the other hand, a larger $F$ indicates a higher coding complexity and is problematic from a practical perspective when $F$ is extremely large. Despite many works attempting to design coded caching schemes with low subpacketization levels, a fundamental problem remains open: What is the minimum transmission load given any fixed subpacketization level?  In this paper, we consider the classical cache-aided networks with identically uncoded placement and one-shot delivery strategy, and investigate the fundamental trade-off between the transmission load and the subpacketization level. We propose a \emph{general} lower bound on the transmission load for any fixed subpacketization by reformulating the centralized coded caching schemes via the combinatorial structure of the corresponding placement delivery array.  The lower bound also recovers existing optimality results for the bipartite graph scheme (including the well-known Maddah-Ali and Niesen (MN) scheme and the conjugate MN scheme) as well as the grouping bipartite graph scheme. Furthermore, by carefully exploiting the combinatorial structure and computing the union size of sorted sets, we establish a new optimality result, i.e.,  the partition scheme can achieve the optimal rate-subpacketization trade-off.
\end{abstract}

\begin{IEEEkeywords}
	Code caching, placement delivery array, subpacketization, transmission load
\end{IEEEkeywords}

\section{Introduction}
\label{sec-introduction}

Caching has been recognized as an effective method to smooth out network traffic during peak times. In cache-aided networks, some content is proactively stored in the users' local cache memories during off-peak hours in the hope that the pre-stored content will be required during peak hours. When this happens, content is retrieved locally, thereby reducing the transmission load from the server to the users. Coded caching was originally proposed by Maddah-Ali and Niesen (MN) in \cite{MN} to further reduce the amount of transmission by creating broadcast coding opportunities, where a central server transmits some coded symbols and each user uses their cache to cancel the non-desired file, and has been widely used in heterogeneous wireless networks.

\subsection{Centralized coded caching system}\label{Model}
Consider a $(K,M,N)$ centralized caching system, see Fig. \ref{fig-origin-system}, where a single server contains a library of $N$ independent files $W_1,\ldots,W_{N}$ each of $T$ bits, and connects to $K$ users via a noiseless shared link. Each user is equipped with a cache memory of size $M$ files, where $N \geq \max\{K,M\}$. Denote the   $N$ files by $\mathcal{W}=\{W_1,\ldots,W_{N}\}$  and the $K$ users by $\mathcal{K}=\{1,\ldots,K\}$.  An $F$-division $(K,M,N)$ coded caching scheme consists of two phases as follows:
\begin{figure}[t] \centering
	\includegraphics[width=4in]{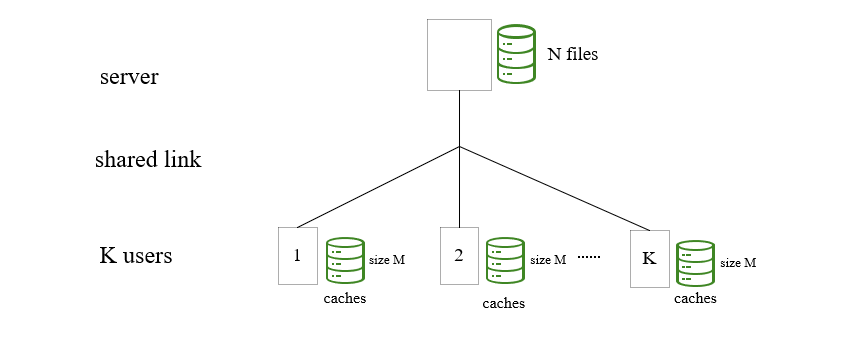} 
	\caption{A $(K,M,N)$ centralized caching  system }\label{fig-origin-system}
\end{figure}
\begin{itemize}
	\item {\bf Placement phase:} In this phase, the server does not know the users' later requests. During the off-peak traffic time, each file is divided into $F$ equal packets\footnote{Memory sharing technique may lead to non-equally divided packets \cite{MN}, and we omit this case for simplicity.}, i.e., $W_i=\{W_{i,j}|j=1,2,\ldots,F\}$, where $W_{i,j}$ represents the $j$-th packet of the file $W_i$. Then each user caches a set of packets or linear combinations of packets from the server. If packets are cached directly, it is called uncoded placement; if linear combinations of packets are cached, we call it coded placement. Let $\mathcal{Z}_k$ denote the contents cached by user $k\in[K]$.  {\color{black}A placement strategy is called {\it identically uncoded} placement when a user caches the $j$-th packet of a file, it will cache the $j$-th packet of all files.}
	
	\item {\bf Delivery phase:} During the peak traffic times, each user randomly requests one file from the files set $\mathcal{W}$ independently. The request vector is denoted by $\mathbf{d}=(d_1,\ldots,d_{K})$, i.e., user $k$ requests the $d_k$-th file $W_{d_k}$, where $d_k\in\{1,\ldots,N\}$ and $k\in\mathcal{K}$. The server broadcasts coded signal $X_t$ as a linear combination of some required packets, where   $t=1,\ldots, S_{{\bf d}}$,   to users such that each user can recover its requested file with the help of its cache contents.  A delivery strategy is called \emph{one-shot} if each user can recover any desired subpacket based on the cached information and at most one transmitted signal, i.e., for each desired subpacket $W_{d_k,j}$, there exists a transmit signal $X_t$, for some $t\in[S]$, satisfying 
	\begin{IEEEeqnarray}{rCl}
		H(W_{d_k,j}|X_t,\mathcal{Z}_k)=0.
	\end{IEEEeqnarray}
\end{itemize}

In this paper, we focus on the $(K,M,N)$ centralized coded caching system with {\it identically uncoded} placement and {\it one-shot} delivery strategy. We define the corresponding transmission load $R$ as the maximal normalized transmission amount among all the requests in the delivery phase, i.e.,
\begin{align*}
	R=\max\{S_{{\bf d}}/F\ | \mathbf{d}\in \{1,\ldots,N\}^K\}.	 
\end{align*} 

We say a pair  $(R,F)$ is achievable if there exists a coded caching scheme with transmission load $R$ and subpacketization $F$ to ensure that all users' request files are successfully recovered.  Previous studies have shown that the transmission load $R$  can be reduced via coded multicasting when the subpacketization level $F$ is sufficiently large. However,  in practice, the file size can be any finite positive number, implying that the subpacketization is finite. Besides, larger subpacketization levels also increase coding complexity and can be impractical for extremely large values of $F$.  We  define the fundamental trade-off between the transmission load and subpacketization level (i.e., rate-subpacketization trade-off) as
\begin{IEEEeqnarray}{rCl}
	R^*(F) = \text{inf} \{ R: (R,F)~\text{is achievable}\}.
\end{IEEEeqnarray}

\subsection{Prior Work}

The first well-known scheme called MN scheme was proposed in  \cite{MN}, whose transmission load is optimal under the constraint of uncoded placement and $N\geq K$~\cite{WTP,YMA2018}, and generally
order optimal within a factor of $2$ \cite{YMA2019}.  
However, there exists a main practical issue of the MN scheme, i.e., its subpacketization increases exponentially with the number of users. To study the finite subpacketization, a matrix with $F$ rows and $K$ columns called placement delivery array (PDA), which can be used to realize a coded caching under identically uncoded placement and one-shot delivery, was proposed in \cite{YCTC}. There are other viewpoints of characterizing coded caching schemes from the viewpoint of graph theory, combinatorial design theory, and coding theory \cite{SZG,YTCC,STD,TR} under identically uncoded caching placement. The authors in \cite{CLTW} introduced a linear characterization of coded caching schemes with coded caching placement from the viewpoint of linear algebra. Clearly, all the schemes under identically uncoded placement and one-shot delivery can be regarded as special cases of linear characterization. From linear characterization, the problem of designing schemes is transformed to constructing three classes of matrices satisfying certain rank conditions that represent user caching strategy, server broadcasting strategy, and user decoding strategy. Furthermore, by these three classes of matrices, the authors showed that the minimum storage regenerating codes with optimal repairing bandwidth proposed by Dimakis et al in \cite{DGWWR} can be used to generate the related linear coded caching scheme. That is the reason why there are many similar structures used between the constructions of coded caching schemes and the minimum storage regenerating codes. Since the study of linear coded caching schemes is much more complicated than that of schemes under uncoded placement, there is only one class of linear coded caching scheme proposed in \cite{CLTW}, which has better performance than the schemes under identically uncoded caching placement for the same subpacketization and memory ratio, while there are many constructions of the schemes under identically uncoded caching placement.

In fact, all coded caching schemes in \cite{SZG,STD,TR,YTCC} can be characterized by PDAs. Very recently, the authors in \cite{CWZW} proposed a unified construction framework for PDA that can represent all the constructions in \cite{YCTC,SZG,TR,YTCC}, and so on. As a result, the problem of designing a PDA can be transformed into a problem of choosing a row index matrix and a column index set appropriately.  There are many other studies on the PDA, such as $3$-vector set \cite{CJTY}, hamming distance \cite{ZCW}, cartesian product \cite{WCWC}, rainbow \cite{XXGL}, binary matrices\cite{SCS}, injective arc coloring of regular digraph \cite{WCCL} and the projective geometry \cite{CHKSM}. In addition, PDA has also been  widely used in various settings such as D2D networks \cite{WCYT}, hierarchical networks \cite{KWC},
combination network \cite{YWY,CLZW}, distributed computing \cite{YYW}, multi-access \cite{SSRB}, hotplug \cite{RCRB}, multi-antenna \cite{YWCQC,NKPR} and location-aware multi-antenna \cite{MHMT}, and so on.


Despite efforts to develop coded caching schemes with low subpacketization levels, the fundamental problem of determining the minimum transmission load for a specific subpacketization level remains an open challenge. Existing converse results in \cite{WTP,YMA2018,YMA2019} determine the minimum communication load without considering subpacketization levels, thus precluding the establishment of a fundamental trade-off between transmission load and subpacketization. Consequently, the lack of an information-theoretic lower bound for transmission load at arbitrary subpacketization levels hinders the evaluation of current low-subpacketization coded caching schemes. To address this issue, this paper investigates the theoretical relationship between transmission load and subpacketization level.

\subsection{Contribution and Organization}
In this paper, we consider an $F$-division $(K,M,N)$ centralized caching system with identically uncoded data placement and one-shot delivery.   We propose a lower bound on the optimal transmission load $R^*(F)$ for any fixed number of users $K$,  memory ratio $M/N$, and subpacketization level $F$. 
Similar to the cut-set bound technique, where individual cuts yield converse bounds and identifying the optimal cut is non-trivial, our lower bound is derived from a sequence of ordered subsets. Determining the optimal ordered subsets is therefore essential for establishing optimality.
By exploiting the combinatorial structure of each placement strategy and carefully computing the union size of the ordered subsets, we show that the partition scheme in \cite{YCTC} can achieve its optimal rate-subpacketization trade-off. Using the same method, we can also show that the bipartite graph scheme in \cite{YTCC,SZG} (including the MN scheme \cite{MN} and the conjugate MN scheme \cite {CJYT}) can achieve their optimal rate-subpacketization trade-offs. This indicates the optimality of the proposed lower bound in many cases.





The rest of this paper is organized as follows. In Section \ref{sec_prob}, the concept of PDA and the corresponding scheme are introduced. In Section \ref{lower-set}, a lower bound on the transmission load for any user number, memory ratio, {\color{black}and subpacketization is derived and the optimality of the existing schemes is shown. Section \ref{sec-optimal-2} presents the proof of our main result.} Finally, Section \ref{conclusion} concludes the paper.

\emph{Notations:}  In this paper,  we use bold capital letters, bold lowercase letters, and calligraphic fonts to denote arrays, vectors, and sets, respectively. If $a$ is not divisible by $q$, $\langle a\rangle_{q}$ denotes the least non-negative residue of $a$ modulo $q$; otherwise, $\langle a\rangle_{q}:=q$.
$|\cdot|$ is used to represent the cardinality of a set or
the length of a vector.
For any positive integers $a$, $b$, $t$ with $a<b$ and $t\leq b $, and any nonnegative set $\mathcal{V}$,
let $[a,b]=\{a,a+1,\ldots,b\}$, especially $[1,b]$ be shorten by $[b]$, and
${[b]\choose t}=\{\mathcal{V}\ |\   \mathcal{V}\subseteq [b], |\mathcal{V}|=t\}$, i.e., ${[b]\choose t}$ is the collection of all $t$-sized subsets of $[b]$. We use $a|b$ to denote that $b$ is divisible by $a$.

\section{Preliminaries}\label{sec_prob}
{\color{black}In this section, we introduce the concept of  PDA and its relationship to the $F$-division $(K,M,N)$ coded caching scheme. Some useful PDAs and their constructions are presented for further discussion.

We first introduce a data placement array $\mathbf{A}=(a_{j,k})_{j\in[F],k\in[K]}$ that represents the data placement strategy. Specifically, given any $F$-division $(K,M,N)$ coded caching scheme under identically uncoded placement strategy, we can construct an $F\times K$ data placement array $\mathbf{A}=(a_{j,k})_{j\in[F],k\in[K]}$ where each entry $a_{j,k}=*$ if user $k\in[K]$ caches the $j$-th packet of all the files, otherwise $a_{j,k}=Null$.  In the following, we introduce PDA \cite{YCTC} that is closely related to data placement array $\mathbf{A}=(a_{j,k})_{j\in[F],k\in[K]}$, but can characterize both the data placement and delivery strategies.}

\subsection{Placement Delivery Array}
\begin{definition}(\cite{YCTC})\rm
\label{def-PDA}
For positive integers $K$,  $F$, and $S$, an $F\times K$ array $\mathbf{P}=(p_{j,k})$, $j\in [F], k\in[K]$,
composed of a specific symbol $``*"$ called star and $S$ symbols in $[S]$, is called a $(K,F,Z,S)$ PDA if it satisfies the following conditions.
\begin{itemize}
\item[C$1$.] Each column has exactly $Z$ stars;
\item [C$2$.] Each integer occurs in $\mathbf{P}$ at least once;
\item [C$3$.] For any two distinct entries $p_{j_1,k_1}$ and $p_{j_2,k_2}$, $p_{j_1,k_1}=p_{j_2,k_2}=s$ is an integer  only if
\begin{enumerate}
	\item [a.] $j_1\ne j_2$, $k_1\ne k_2$, i.e., they lie in distinct rows and distinct columns; and
	\item [b.] $p_{j_1,k_2}=p_{j_2,k_1}=*$, i.e., the corresponding $2\times 2$  subarray formed by rows $j_1,j_2$ and columns $k_1,k_2$ must be of the following form
	\begin{eqnarray*}
		\left(\begin{array}{cc}
			s & *\\
			* & s
		\end{array}\right)~\textrm{or}~
		\left(\begin{array}{cc}
			* & s\\
			s & *
		\end{array}\right).
	\end{eqnarray*}
\end{enumerate}
\end{itemize}
\end{definition}
For instance, it is easy to verify that the array is a $(6,4,2,4)$ PDA.
\begin{eqnarray}
\label{eq-E-pda-1}
\mathbf{P}=\left(\begin{array}{cccccc}
*&*&*&1&2&3\\
*&1&2&*&*&4\\
1&*&3&*&4&*\\
2&3&*&4&*&*
\end{array}\right).
\end{eqnarray}

Yan et al. in \cite{YCTC} showed that any PDA can be used to generate a coded caching scheme, leading to the following result.
\begin{lemma}(\cite{YCTC})\rm
\label{le-Fundamental} If there exists a $(K,F,Z,S)$ PDA, then by Algorithm \ref{alg:PDA} we can obtain an $F$-division $(K,M,N)$ coded caching scheme with $\frac{M}{N}=\frac{Z}{F}$ and transmission load $R=\frac{S}{F}$.
\end{lemma}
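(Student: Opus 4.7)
The plan is to follow the standard PDA-to-scheme construction that Algorithm \ref{alg:PDA} must implement, and to check separately the memory ratio, the decodability, and the transmission count. I would first fix any request vector $\mathbf{d}=(d_1,\ldots,d_K)$ and define the placement by the star pattern of $\mathbf{P}$: split each file $W_i$ into $F$ equal packets $W_{i,1},\ldots,W_{i,F}$, and let user $k$ cache $W_{i,j}$ for every $i\in[N]$ exactly when $p_{j,k}=*$. This is an identically uncoded placement. By condition C$1$, each column contains exactly $Z$ stars, so each user stores $ZN$ packets out of $FN$, yielding $M/N=Z/F$.

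Next I would describe the delivery. For every integer $s\in[S]$ the server broadcasts the single coded signal
\begin{equation*}
X_s \;=\; \bigoplus_{(j,k)\,:\,p_{j,k}=s} W_{d_k,j},
\end{equation*}
which is well defined and non-empty by C$2$. There are exactly $S$ such signals, so the load is $S/F$, giving $R=S/F$ and one-shot delivery.

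The core step is verifying decodability, and this is where the combinatorial strength of Definition \ref{def-PDA} is used. Fix a user $k$ and a packet $W_{d_k,j}$ that is not cached; then $p_{j,k}$ is some integer $s$, so $W_{d_k,j}$ appears as a summand of $X_s$. For any other summand $W_{d_{k'},j'}$ of $X_s$, we have $p_{j',k'}=s$ with $(j',k')\neq(j,k)$, and condition C$3$ forces $j\neq j'$, $k\neq k'$, and in particular $p_{j',k}=*$. Hence user $k$ has cached $W_{d_{k'},j'}$ and can subtract it from $X_s$; doing this for every other summand recovers $W_{d_k,j}$ from $X_s$ together with $\mathcal{Z}_k$, i.e.\ $H(W_{d_k,j}\mid X_s,\mathcal{Z}_k)=0$. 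Ranging over all missing packets of user $k$ and all users $k$ establishes the one-shot recovery property.

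The main obstacle, and the only non-routine part, is precisely this decoding argument: one must use condition C$3$ in both directions (distinct rows/columns \emph{and} stars in the complementary positions) to guarantee that every interfering term in $X_s$ has already been cached by the intended user. Once this is in place the memory and rate computations are immediate, so I would organize the write-up as placement $\to$ memory ratio via C$1$ $\to$ delivery definition via C$2$ $\to$ decoding via C$3$ $\to$ rate count, matching exactly what Algorithm \ref{alg:PDA} prescribes.
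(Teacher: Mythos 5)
Your proposal is correct and follows essentially the same route as the original proof in the cited reference \cite{YCTC}, which the paper itself does not reproduce but summarizes informally in its discussion of the relation between the data placement array and the PDA: C$1$ gives the memory ratio, C$2$ justifies the $S$ multicast signals, and C$3$ (both the distinct rows/columns part and the complementary stars part) yields one-shot decodability, with $R=S/F$ following from the count of signals. No gaps; this is the standard argument.
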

\begin{algorithm}[htb]
\caption{Coded caching scheme based on PDA in \cite{YCTC}}\label{alg:PDA}
\begin{algorithmic}[1]
\Procedure {Placement}{$\mathbf{P}$, $\mathcal{W}$}
\State Split each file $W_n\in\mathcal{W}$ into $F$ packets, i.e., $W_{n}=\{W_{n,j}\ |\ j\in[F]\}$.
\For{$k\in \mathcal{K}$}
\State $\mathcal{Z}_k\leftarrow\{W_{n,j}\ |\ p_{j,k}=*, \forall~n\in [N]\}$
\EndFor
\EndProcedure
\Procedure{Delivery}{$\mathbf{P}, \mathcal{W},{\bf d}$}
\For{$s=1,\cdots,S$}
\State  Server sends $\bigoplus_{p_{j,k}=s,j\in[F],k\in[K]}W_{d_{k},j}$.
\EndFor
\EndProcedure
\end{algorithmic}
\end{algorithm}
Let us take the following example to illustrate Algorithm \ref{alg:PDA}.
\begin{example}\rm
\label{E-pda}
Using the PDA in \eqref{eq-E-pda-1} and Algorithm \ref{alg:PDA},  we can obtain a $4$-division $(6,3,6)$ coded caching scheme as follows.

\begin{itemize}
\item \textbf{Placement Phase}: From Line 2 of  Algorithm \ref{alg:PDA}, we have $W_n=\{W_{n,1},W_{n,2},W_{n,3},W_{n,4}\}$, $n\in [6]$. Then by Lines 3-5, the contents cached by user $1$, $2$, $\ldots$, $6$ are respectively
\begin{eqnarray*}
	\mathcal{Z}_1=&\left\{W_{n,1},W_{n,2}| n\in[6]\right\},\\
	\mathcal{Z}_2=&\left\{W_{n,1},W_{n,3}| n\in[6]\right\},\\
	\mathcal{Z}_3=&\left\{W_{n,1},W_{n,4}| n\in[6]\right\},\\
	\mathcal{Z}_4=&\left\{W_{n,2},W_{n,3}| n\in[6]\right\},\\
	\mathcal{Z}_5=&\left\{W_{n,2},W_{n,4}| n\in[6]\right\},\\
	\mathcal{Z}_6=&\left\{W_{n,3},W_{n,4}| n\in[6]\right\}. 
\end{eqnarray*}

\item \textbf{Delivery Phase}: Assume that the request vector is $\mathbf{d}=(1,2,3,4,5,6)$. By Lines 8-10, the signals sent by the server in four time slots are as follows: $W_{1,3}\oplus W_{2,2}\oplus W_{4,1}$, $W_{1,4}\oplus W_{3,2}\oplus W_{5,1}$,  $W_{2,4}\oplus W_{3,3}\oplus W_{6,1}$, and $W_{4,4}\oplus W_{5,3}\oplus W_{6,2}$.

\end{itemize}
\end{example}

\subsection{Relations Between Data Placement  Array $\mathbf{A}$ and PDA $\mathbf{P}$}
It is not difficult to check that each column in $\mathbf{A}$ has exactly $Z=FM/N$ stars. In the delivery phase, assume the server broadcasts a coded packet at time slot $s$, say $W_{d_{k_1},j_1}\oplus W_{d_{k_2},j_2}\oplus\cdots\oplus W_{d_{k_g},j_g}$ for some integer $g\in[K]$. When the server uses the one-shot delivery strategy, each user $k_i$ where $i\in[g]$ can decode its required packet $W_{d_{k_i},j_i}$ only if it caches the other packets $W_{d_{k_{i'}},j_{i'}}$ where $i'\neq i$. This implies that $a_{k_i,j_i}=Null$ and $a_{k_{i'},j_{i'}}=*$. With one-shot delivery, each required packet is transmitted by the server exactly once. So we can put the
entry $a_{k_i,j_i}$ in the unique integer $s$. If the delivery phase contains $S$ times slots, we can check that the obtained array $\mathbf{P}$ is exactly a $(K,F,Z,S)$ PDA. So we have the following result.
\begin{lemma}\rm
\label{le-relationship}	Consider centralized coded caching systems with identically uncoded placement and one-shot delivery. For any positive integers $K$, $F$, $M$ and $N$ satisfying that $Z=FM/N$ is an integer, an $F$-division $(K,M,N)$ coded caching scheme with $S$ time slots in the delivery phase exists if and only if there exists a $(K,F,Z,S)$ PDA.
\end{lemma}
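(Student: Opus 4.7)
The plan is to prove each direction of the biconditional. The ``if'' direction is immediate from Lemma \ref{le-Fundamental}, which via Algorithm \ref{alg:PDA} turns any $(K,F,Z,S)$ PDA into an $F$-division $(K,M,N)$ scheme with identically uncoded placement, one-shot delivery, and exactly $S$ broadcast rounds. The substantive content is the converse, which I would prove constructively: starting from an arbitrary scheme with the stated properties and $S$ time slots, build an $F\times K$ array $\mathbf{P}$ and check that it satisfies all three PDA axioms of Definition \ref{def-PDA}.

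For the construction I would first fix a worst-case request vector $\mathbf{d}$ with pairwise distinct demands (possible since $N\geq K$), then set $p_{j,k}=*$ whenever user $k$ caches the $j$-th packet of every file and otherwise label $p_{j,k}=s$, where $s\in[S]$ is the unique time slot at which user $k$ decodes its desired packet $W_{d_k,j}$. Uniqueness comes from the one-shot assumption, and every non-star cell receives a label because user $k$ must eventually recover each of its uncached requested packets.

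Verification of C1 is a counting argument: identical uncodedness makes ``index $j$ is cached by user $k$'' a $\{0,1\}$ attribute, and the memory budget $MF$ together with $N$ packets per cached index collapses the per-user star count to exactly $Z=MF/N$ (we may assume full cache usage without loss of generality). C2 can be enforced by discarding redundant transmissions: if some label $s$ never appears in $\mathbf{P}$, then no user decodes from $X_s$, so removing it still yields a valid scheme with fewer rounds, and we may start from a minimal scheme. C3 is the crux. If $p_{j_1,k_1}=p_{j_2,k_2}=s$, then both users decode desired packets from the same transmission $X_s$; one-shot decoding forces user $k_1$ to cancel the term $W_{d_{k_2},j_2}$ from $X_s$ (and symmetrically for $k_2$), which under identically uncoded placement means user $k_1$ caches the entire index $j_2$, i.e.\ $p_{j_2,k_1}=*$, and similarly $p_{j_1,k_2}=*$. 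The inequalities $j_1\neq j_2$ and $k_1\neq k_2$ then follow by contradiction against these cancellation conclusions: for instance, $j_1=j_2$ would force $p_{j_2,k_2}$ to be simultaneously an integer label and a star.

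I expect the main technical obstacle to be making the cancellation argument rigorous under the general linear (rather than strictly XOR) formulation of transmissions. The one-shot condition $H(W_{d_k,j}\mid X_s,\mathcal{Z}_k)=0$ is an information-theoretic statement, and translating it into the combinatorial claim ``every other packet effectively present in $X_s$ is cached by user $k$'' requires showing that identical uncodedness leaves no room for algebraic mixing between distinct packet indices. This is the step where the assumption that the placement is identically uncoded (rather than merely uncoded) does the real work; once this structural reduction is in hand, the three PDA conditions follow by direct bookkeeping on the labeled array $\mathbf{P}$.
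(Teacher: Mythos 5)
Your proposal is correct and follows essentially the same route as the paper: the forward direction is delegated to Algorithm \ref{alg:PDA} (Lemma \ref{le-Fundamental}), and the converse builds the array from the placement and the decoding time slots, using the one-shot property to force the star pattern of C3 and the cache budget to force C1. If anything, your write-up is more careful than the paper's, which simply assumes each transmission has XOR form $W_{d_{k_1},j_1}\oplus\cdots\oplus W_{d_{k_g},j_g}$ and asserts the cancellation condition, whereas you correctly flag (and sketch the resolution of) the step from the entropy condition $H(W_{d_k,j}\mid X_s,\mathcal{Z}_k)=0$ under general linear combinations to the combinatorial caching requirement.
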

From the above introduction, in order to get the smallest transmission load for any fixed number of users $K$, subpacketization $F$, and memory ratio $M/N=Z/F$, we only need to construct a PDA with $S$ as small as possible. In addition, we replace all the integers in the non-star entries of a PDA, we can obtain a placement array of the scheme realized by this PDA.

Finally, let us introduce the partition PDA \cite{YCTC,TR,SZG}, which can be unified and constructed in the following rule  \cite{CWZW}.
\begin{construction}[\cite{CWZW}]\rm
\label{constr-3}
For any positive integers $q$, $m$ and $t$ with $0<t<m$, let
\begin{align}
\label{eq-consr-3}
\mathcal{F}&=\left\{{\bf f}=(f_{1},f_{2},\ldots,f_{m}, \sum_{i\in[m]} f_i)\ \Big|\  f_1,f_2,\ldots,f_{m}\in [q]\right\},\nonumber\\
\mathcal{K}&=\{(u,v)\ |\ u\in[m+1], v\in[q]\}.
\end{align}	We can obtain a $((m+1)q,q^m, q^{m-1}, (q-1)q^m)$ PDA  $\mathbf{P}=(p_{{\bf f},(u,v)})_{{\bf f}\in \mathcal{F},(u,v)\in \mathcal{K}}$, where  $|\mathcal{F}|\times |\mathcal{K}|=q^m\times (m+1)q$, using the following way:
\begin{eqnarray}
\label{eq-constr-PDA-OA}
p_{{\bf f},(u,v)}=\left\{
\begin{array}{ll}
*&\ \text{if}\ f_u= v\\
({\bf e},n_{\bf e}) & \textrm{otherwise},
\end{array}
\right.
\end{eqnarray}
where ${\bf e}=(e_1,e_{2},\ldots,e_{m})\in[q]^m$ with $e_u=v$ and $e_i=f_i$ for any integer $i\in [m]\setminus\{u\}$.
$n_{\bf e}$ is the  occurrence  order of vector ${\bf e}$ occurring in column $(u,v)$ and starts from $1$.
\end{construction}
\begin{example}\rm
\label{exm-partition}When $m=2$ and $q=3$, from \eqref{eq-consr-3} we have $\mathcal{F}=\{(1,1,2)$, $(2,1,3)$, $(3,1,1)$, $(1,2,3)$, $(2,2,1)$, $(3,2,2)$, $(1,3,1)$, $(2,3,2)$, $(3,3,3)\}$ and 
$\mathcal{K}=\{(1,1)$, $(1,2)$, $(1,3)$, $(2,1)$, $(2,2)$, $(2,3)$, $(3,1)$, $(3,2)$, $(3,3)\}$. By the rule in \eqref{eq-constr-PDA-OA}, the  PDA in \eqref{Eq:matrix1} can be obtained  (on the top of the next page).
\begin{table*}\begin{equation}\label{Eq:matrix1}
\bordermatrix{%
	&(1,1)      &(1,2)      &(1,3)      &(2,1)     &(2,2)     &(2,3)      &(3,1)      &(3,2)        &(3,3)\cr
	(1,1,2)&*          &(2,1,2)    &(3,1,2)    &*         &(1,2,2)   &(1,3,2)    &(1,1,1)    &*            &(1,1,3)               \cr
	(2,1,3)&(1,1,3)    &*          &(3,1,3)    &*         &(2,2,3)   &(2,3,3)    &(2,1,1)    &(2,1,2)      &*                     \cr
	(3,1,1)&(1,1,1)    &(2,1,1)    &*          &*         &(3,2,1)   &(3,3,1)    &*          &(3,1,2)      &(3,1,3)                  \cr
	(1,2,3)&*          &(2,2,3)    &(3,2,3)    &(1,1,3)   &*         &(1,3,3)    &(1,2,1)    &(1,2,2)      &*                      \cr
	(2,2,1)&(1,2,1)    &*          &(3,2,1)    &(2,1,1)   &*         &(2,3,1)    &*          &(2,2,2)      &(2,2,3)                \cr
	(3,2,2)&(1,2,2)    &(2,2,2)    &*          &(3,1,2)   &*         &(3,3,2)    &(3,2,1)    &*            &(3,2,3)              \cr
	(1,3,1)&*          &(2,3,1)    &(3,3,1)    &(1,1,1)   &(1,2,1)   &*          &*          &(1,3,2)      &(1,3,3)                   \cr
	(2,3,2)&(1,3,2)    &*          &(3,3,2)    &(2,1,2)   &(2,2,2)   &*          &(2,3,1)    &*            &(2,3,3)            \cr
	(3,3,3)&(1,3,3)    &(2,3,3)    &*          &(3,1,3)   &(3,2,3)   &*          &(3,3,1)    &(3,3,2)      &*                   }=\mathbf{P}.
	\end{equation}
\end{table*}
\end{example}

\section{A Lower bound on ${R}^*(F)$}
\label{lower-set}
In this section, we present a novel lower bound on $R^*(F)$ under identically uncoded placement and one-shot delivery, and demonstrate that the bound is tight in many cases. Given a data placement array $\mathbf{A}$, we  define the subsets $\mathcal{A}_1$, $\mathcal{A}_2$, $\ldots$, $\mathcal{A}_{K}$ as
follows.
\begin{eqnarray}
\label{eq-colum-set}
\mathcal{A}_k=\{j\ |\ a_{j,k}\neq *, j\in [F] \}.
\end{eqnarray}
Here $\mathcal{A}_k$ denotes the index set of subfiles that are not cached by user $k\in[K]$.
From the above notation, the following statement holds.
\begin{theorem}
\label{th-PDA-set-bound}
Given an $F$-division $(K,M,N)$ centralized coded caching system with data placement array $\mathbf{A}$ and subpacketization level $F$, the optimal transmission load  $R^*(F)$ satisfies
\begin{eqnarray}
\label{eq-max}
{R^*(F)}\geq \max_{ (i_1,\ldots,i_{K})\in \mathcal{I}} \frac{\sum_{h=1}^{K}\Big|\bigcap_{j=1}^{h}\mathcal{A}_{i_j}\Big|\    }{F},
\end{eqnarray}
where $\mathcal{I}$ denotes the set of all permutations of $[K]$, and $\mathcal{A}_{i_j}$ is defined in \eqref{eq-colum-set}.
\end{theorem}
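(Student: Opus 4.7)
The plan is to identify any one-shot scheme with placement $\mathbf{A}$ with a PDA $\mathbf{P}$ that extends $\mathbf{A}$ (via Lemma \ref{le-relationship}), so that the transmission load is $R = S/F$ and lower-bounding $R^*(F)$ reduces to lower-bounding the number $S$ of distinct integers appearing in every such $\mathbf{P}$. For each $k \in [K]$, let $T_k \subseteq [S]$ be the set of integers appearing in column $k$ of $\mathbf{P}$. Condition C$3$ forces the integers in one column to be pairwise distinct, so $|T_k| = F - Z = |\mathcal{A}_k|$, and $T_k$ is exactly the set of transmissions that deliver a missing packet to user $k$. Condition C$2$ then gives $\bigcup_{k \in [K]} T_k = [S]$, so $S = \big|\bigcup_{k \in [K]} T_k\big|$.

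Fix any permutation $(i_1,\ldots,i_K) \in \mathcal{I}$ and apply the chain-rule (telescoping) identity
$$S \;=\; \sum_{h=1}^{K} \big| T_{i_h} \setminus (T_{i_1} \cup \cdots \cup T_{i_{h-1}}) \big|.$$
The heart of the argument is the combinatorial inequality
$$\big| T_{i_h} \setminus (T_{i_1} \cup \cdots \cup T_{i_{h-1}}) \big| \;\geq\; \Big|\bigcap_{j=1}^{h} \mathcal{A}_{i_j}\Big|,\qquad h=1,\ldots,K,$$
which I would prove by exhibiting an injection $j \mapsto p_{j, i_h}$ from $\bigcap_{j=1}^{h}\mathcal{A}_{i_j}$ into $T_{i_h} \setminus \bigcup_{j<h} T_{i_j}$. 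Injectivity is immediate from C$3$ restricted to column $i_h$. For well-definedness of the image, suppose toward contradiction that $p_{j,i_h} = p_{j', i_{h'}} = s$ for some $h' < h$ and some $j' \in \mathcal{A}_{i_{h'}}$; then the $2 \times 2$ clause of C$3$ (applied to the two distinct entries equal to $s$) forces $p_{j, i_{h'}} = \ast$, i.e., $j \notin \mathcal{A}_{i_{h'}}$, contradicting $j \in \bigcap_{j=1}^{h}\mathcal{A}_{i_j}$.

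Summing the inequality over $h$, dividing by $F$, and noting that the argument is uniform over all PDAs realizing placement $\mathbf{A}$, we obtain $R \geq \frac{1}{F}\sum_{h=1}^{K} \big|\bigcap_{j=1}^{h}\mathcal{A}_{i_j}\big|$ for the chosen permutation; maximizing over $\mathcal{I}$ yields the theorem. The main obstacle is isolating a quantity that is both expressible purely in terms of the placement $\mathbf{A}$ (the PDA extension is unknown at the converse stage) and still tight enough to recover the optimality of the partition, MN, and bipartite graph schemes. The chain-rule peel-off combined with the $2\times 2$ clause of C$3$ is precisely what accomplishes this, and the ordered intersections $\bigcap_{j=1}^{h}\mathcal{A}_{i_j}$ emerge naturally as the minimum number of ``fresh'' transmissions that user $i_h$ must incur once users $i_1,\ldots,i_{h-1}$ have been accounted for.
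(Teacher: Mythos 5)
Your proposal is correct and follows essentially the same route as the paper: both reduce to a $(K,F,Z,S)$ PDA via Lemma~\ref{le-relationship} and then use property C$3$ to show that the integers sitting in the cells $(x,i_h)$ with $x\in\bigcap_{j=1}^{h}\mathcal{A}_{i_j}$ are distinct within a column and cannot reappear in any earlier column $i_{h'}$ (since the $2\times 2$ clause would force $p_{x,i_{h'}}=*$, contradicting $x\in\mathcal{A}_{i_{h'}}$). Your telescoping decomposition of $S$ and the explicit injection are just a cleaner packaging of the paper's direct count of distinct integers over these cell sets.
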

\begin{proof}
{\color{black}Recall from Lemma \ref{le-relationship} that, given an $F$-division $(K,M,N)$ centralized coded caching system with identically uncoded placement and one-shot delivery and subpacketization level $F$, we can obtain a $(K,F,Z,S)$ PDA, where $S$ denotes the total number of time slots and $Z$ indicates that each user caches $Z$ packets of each file.}  According to \eqref{eq-colum-set}, the sets  $\mathcal{A}_1$, $\mathcal{A}_2$, $\ldots$, $\mathcal{A}_{K}$ can be obtained. For any permutation,
say $(i_1,i_2,\ldots,i_{K})\in \mathcal{I}$, we claim that for any two integers $h<h'\in [K]$ and any two distinct
$$x_1\in \bigcap_{j=1}^{h}\mathcal{A}_{i_j},\ \ \ x_2\in \bigcap_{j'=1}^{h'}\mathcal{A}_{i_{j'}},$$
$p_{x_1,i_h},p_{x_2,i_{h}}$, $p_{x_1,i_{h'}}$ and $p_{x_2,i_{h'}}$ are integers from \eqref{eq-colum-set}. In addition, these integers satisfy the following two statements.
\begin{itemize}
\item By the first property of C$3$ in Definition \ref{def-PDA}, i.e., each integer occurs in each column and each row at most once, we have $p_{x_1,i_h}\neq p_{x_2,i_{h}}$ and $p_{x_2,i_h}\neq p_{x_2,i_{h'}}$.
\item $p_{x_1,i_{h}}\neq p_{x_2,i_{h'}}$ holds by the second property of C$3$ in Definition \ref{def-PDA}. Otherwise assume that $p_{x_1,i_{h}}= p_{x_2,i_{h'}}$. By the second property of C$3$ in Definition \ref{def-PDA}, we have $p_{x_1,i_{h'}}= p_{x_2,i_{h}}=*$. This is impossible since $x_2\in  \cap_{j'=1}^{h'}\mathcal{A}_{i_{j'}}\subseteq\cap_{j=1}^{h}\mathcal{A}_{i_j}$. This implies that the cell $(x_2,i_h)$ contains an integer.
\end{itemize}

Consequently, the number of distinct elements in the cells $(x_1,i_h)$ with $x_1\in \cap_{j=1}^{h}\mathcal{A}_{i_j}$ equals $$|\mathcal{A}_{i_1}|+|\mathcal{A}_{i_1}\bigcap
\mathcal{A}_{i_2}|+\cdots+|\bigcap_{j=1}^{K}\mathcal{A}_{i_j}|.$$ Clearly $S \geq |\mathcal{A}_{i_1}|+|\mathcal{A}_{i_1}\cap \mathcal{A}_{i_2}|+\cdots+|\cap_{j=1}^{K}\mathcal{A}_{i_j}|$.  Then the proof is complete.
\end{proof}

\begin{remark}
\label{remark-optimal} Theorem \ref{th-PDA-set-bound} applies to all schemes that can be represented via appropriate PDAs and can recover optimality results in \cite{WTP,YMA2018,BE} (see the full proof in Appendix \ref{appendix-MN}).  Note that the related works in  \cite{WTP,YMA2018,BE} only established optimal lower bounds of communication load for \emph{specific} subpacketization levels, failing to reveal the \emph{general}  trade-off between the transmission load and subpacketization.  Specifically,   \cite{WTP,YMA2018} proved the optimal load $R^*(F)=K(1-M/N)\frac{1}{KM/N}$ under $F\geq {K\choose KM/N}$ and arbitrary storage $M/N$ (i.e., the MN PDA) and \cite{BE} established the optimal $R^*(F)=h{m\choose a+b}/{m\choose b}$ when $F={m\choose b}$ and $M/N = \frac{{m\choose b}-{m-a\choose b}}{{m\choose b}}$(i.e., the strong edge-colored bipartite graph PDA with $\lambda=0$ in \cite{YTCC}).  In addition, our theorem can establish new optimality results as stated in Theorem 2, where $R^*(F)\approx q-1$ with $F=q^m$ and $M/N=1/q$ for any positive integers $m$ and $q\geq 2$.
\end{remark}

\begin{remark}
Our established theoretical trade-off between subpacketization and communication load offers essential design guidelines for practical application. Specifically, to meet a specified communication overhead, our trade-off identifies the minimum subpacketization level and corresponding computational complexity, thereby guiding the selection of necessary storage, computation, and transmission schemes. Moreover, Eq. \eqref{eq-max} indicates that a communication-efficient, one-shot coded caching scheme should maximize the number of subfiles shared among users to increase the coded multicast gain.

\end{remark}

Let us consider $K=6$, $F=4$ and $Z=1$ to illustrate the inequality in \eqref{eq-max}. It is easy to check that the following array is a $(6,4,1,11)$ PDA.
\begin{eqnarray*}
\mathbf{P}_{4\times 6}=\left(\begin{array}{cccccc}
1    &2     &  3    &* & 7     & 8\\
4    &5     &* &3      & 9     & 10\\
6    &*&5      &2      &11      &*\\
*&6     &4      &1      &* & 11
\end{array}\right).
\end{eqnarray*}
According to \eqref{eq-colum-set}, we have $\mathcal{A}_1=\{1,2,3\}$, $\mathcal{A}_2=\{1,2,4\}$, $\mathcal{A}_3=\{1,3,4\}$, $\mathcal{A}_4=\{2,3,4\}$, $\mathcal{A}_5=\{1,2,3\}$, and
$\mathcal{A}_6=\{1,2,4\}$. By applying \eqref{eq-max}, we have
\begin{align*}
&\max\left\{\sum_{h=1}^{6}\Big|\bigcap_{j=1}^{h}\mathcal{A}_{i_j}\Big|\ \Big|\ (i_1,\ldots,i_{6})\in \mathcal{I}\right\}\\
=&|\mathcal{A}_1|+|\mathcal{A}_1\bigcap \mathcal{A}_5|+|\mathcal{A}_1\bigcap \mathcal{A}_5\bigcap \mathcal{A}_2|+|\mathcal{A}_1\bigcap \mathcal{A}_5\bigcap\mathcal{A}_2\bigcap \mathcal{A}_6|+|\mathcal{A}_1\bigcap \mathcal{A}_5\bigcap \mathcal{A}_2\bigcap \mathcal{A}_6\bigcap \mathcal{A}_3|\\
=&3+3+2+2+1\\
=&11.
\end{align*}

For any scheme with identically uncoded placement, one-shot delivery, and subpacketization level $F$, by Theorem \ref{th-PDA-set-bound} we should first find one of the best sorting orders for all the subsets $\mathcal{A}_1$, $\mathcal{A}_2$, $\ldots$, $\mathcal{A}_{K }$ of $[F]$ and then try to compute the exact value of $\sum_{h=1}^{K}|\cap_{j=1}^{h}\mathcal{A}_{i_j}|$ for each integer $h\in [2:K]$. This implies that we should carefully consider both the combinatorial structure of each placement strategy and compute the size of union. With the lower bound in Theorem \ref{th-PDA-set-bound}, by exploiting the combinatorial structures of the partition PDA and enumerative combinatorics,  we prove that the existing schemes in \cite{YCTC,SZG,CWZW,CJTY} achieve the optimal rate-subpacketization trade-off $R^*(F)$ under their respective placement strategies. The results are as follows, with their proofs provided in Section \ref{sec-optimal-2}.
\begin{theorem}\rm\label{th-second-optimality}
For any positive integers $m$ and $q\geq 2$, the $((m+1)q,q^m, q^{m-1}, (q-1)q^m)$ Partition PDA in \cite{YCTC} achieves its optimal rate-subpacketization trade-off if $q=2$ and the approximately optimal rate-subpacketization trade-off if $q>2$.
\end{theorem}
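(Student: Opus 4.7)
The plan is to invoke Theorem~\ref{th-PDA-set-bound} and exhibit, for every data placement array $\mathbf{A}$ whose parameters $(K,F,Z)=((m+1)q,q^m,q^{m-1})$ match those of the partition PDA from Construction~\ref{constr-3}, an ordering $(i_1,\ldots,i_K)$ of the columns under which $\sum_{h=1}^{K}|\bigcap_{j=1}^{h}\mathcal{A}_{i_j}|\ge (q-1)q^m$ when $q=2$, and the same quantity up to a vanishing relative correction when $q>2$. Since the partition PDA has $S=(q-1)q^m$ transmitted signals, such a matching lower bound on $R^{*}(F)$ via Theorem~\ref{th-PDA-set-bound} closes the gap and establishes the optimality claim.

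The first step is to recast the target as a cover-time statement. Let $\mathcal{B}_k:=[F]\setminus\mathcal{A}_k$ denote the star set of column $k$, of size $q^{m-1}$. For each row $r\in[F]$, define the first-cover time $h(r):=\min\{h\in[K]:r\in\mathcal{B}_{i_h}\}$, with the convention $h(r):=K+1$ if $r$ is never covered. By de Morgan's law $|\bigcap_{j\le h}\mathcal{A}_{i_j}|=F-|\bigcup_{j\le h}\mathcal{B}_{i_j}|$, and a double count over pairs $(h,r)$ yields the identity
\[
\sum_{h=1}^{K}\Big|\bigcap_{j=1}^{h}\mathcal{A}_{i_j}\Big|\;=\;\sum_{r\in[F]}h(r)\;-\;F,
\]
so it suffices to produce a permutation with $\sum_{r}h(r)\ge q^{m+1}$, i.e., with mean first-cover time at least $q$.

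The second step is to construct such an ordering greedily: at step $h$, choose the unused column $i_h$ minimizing $|\mathcal{B}_{i_h}\setminus\bigcup_{j<h}\mathcal{B}_{i_j}|$ (equivalently, maximizing the running intersection of the $\mathcal{A}_{i_j}$'s). Writing $U_{h-1}:=\bigcup_{j<h}\mathcal{B}_{i_j}$, the global count $\sum_{k}|\mathcal{B}_k|=(m+1)q^m$ together with the observation that every still-uncovered row $r$ is contained in exactly $K-c_r$ of the remaining columns implies, by averaging, that in the balanced regime $c_r=(m+1)(q-1)$ the greedy increment satisfies $|U_h|-|U_{h-1}|\le (F-|U_{h-1}|)(m+1)/(K-h+1)$. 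Unrolling this recursion and summing the resulting union sizes over $h=1,\ldots,K$ yields $\sum_r h(r)\ge q^{m+1}$ exactly when $q=2$, and $\sum_r h(r)\ge q^{m+1}-O(q^{m-1})$ when $q>2$.

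The main obstacle is validating the greedy bound uniformly across \emph{all} admissible placement arrays, not only the balanced one realized by the partition PDA, and precisely quantifying the approximation loss for $q>2$. For $q=2$, the rigid constraint $|\mathcal{A}_k|=F/2$ forces a highly symmetric family of columns, and a pairing argument (grouping columns whose $\mathcal{A}_k$'s coincide, or, failing that, pairing complementary columns so that the running intersection halves exactly every two steps) closes the recursion with equality, establishing exact optimality. For $q>2$, the greedy decrement can overshoot the ideal ratio $1/q$ once $|U_{h-1}|$ becomes close to $F$; enumerating these last-few-step contributions via a careful count of the union sizes of the sorted star sets (the ``union size of sorted sets'' referenced in the abstract) and showing that the aggregate loss is $o((q-1)q^m)$ is the principal technical challenge, and is precisely what yields the approximate-optimality conclusion stated in the theorem.
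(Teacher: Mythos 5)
Your overall strategy matches the paper's: invoke Theorem~\ref{th-PDA-set-bound}, exhibit a good column ordering, and show the resulting sum of running intersections reaches (or nearly reaches) $S=(q-1)q^m$. Your first-cover-time identity $\sum_{h=1}^{K}|\bigcap_{j\le h}\mathcal{A}_{i_j}|=\sum_{r}h(r)-F$ is correct and a clean reformulation. However, there is a genuine gap in the quantitative step. The greedy-plus-averaging recursion $|U_h|-|U_{h-1}|\le (F-|U_{h-1}|)\frac{m+1}{K-h+1}$ is valid in the balanced regime, but it is too weak to prove the theorem: unrolling it gives $\sum_{h\ge 1}(F-|U_h|)\ge q^m\cdot\frac{(m+1)(q-1)}{m+2}$, which is strictly below $(q-1)q^m$ for every $m$, and even with integer rounding it falls short (e.g., for $q=2$, $m=6$ one gets $32+15+7+3+1+1+1=60<64$). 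So the claim that the recursion ``yields $\sum_r h(r)\ge q^{m+1}$ exactly when $q=2$'' is false, and exact optimality at $q=2$ is not established. The reason is that an averaging bound over all remaining columns cannot see the algebraic structure: the paper's ordering takes exactly one set from each family $\{\mathcal{A}_{u,v}\}_{v\in[q]}$ for $u\in[m]$ (justified by Proposition~\ref{pro-m-order-set}), which shrinks the running intersection by the factor $(q-1)/q$ at \emph{every} one of the first $m$ steps, far better than the averaging factor $1-\frac{m+1}{K-h+1}$; it then handles the last family $\{\mathcal{A}_{m+1,v}\}_{v}$ separately via Lemma~\ref{lemma-key-m+1} and Proposition~\ref{pro-2} to obtain the exact value $(q-1)q^m-\frac{(q-1)^{m+1}}{2}+\frac{q-1}{2}$, which equals $(q-1)q^m$ precisely when $q=2$.

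The patch you propose for $q=2$ --- a ``pairing argument'' grouping coincident or complementary columns so the intersection ``halves exactly every two steps'' --- does not work as described: for the partition placement with $q=2$ the two columns $(u,1),(u,2)$ of a family satisfy $\mathcal{A}_{u,1}\cap\mathcal{A}_{u,2}=\emptyset$, so pairing complementary columns annihilates the running intersection rather than controlling it; the correct move is to take one column per family, which is exactly what the paper does. Separately, you set out to prove the bound for \emph{every} placement array with parameters $(K,F,Z)=((m+1)q,q^m,q^{m-1})$, which is strictly stronger than what Theorem~\ref{th-second-optimality} asserts (optimality under the partition placement, via Theorem~\ref{th-PDA-set-bound} applied to that fixed array); you correctly flag that the unbalanced case breaks your averaging step, but you leave it unresolved, so this stronger claim is also not established. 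Finally, for $q>2$ the asserted $o((q-1)q^m)$ loss is never computed; the paper's explicit evaluation shows the loss is $\frac{(q-1)^{m+1}}{2}-\Theta(1)$, giving a ratio $1-\frac{1}{2}\left(\frac{q-1}{q}\right)^m+o(1)$, which is the content of the approximate-optimality claim.
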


Finally, Theorem \ref{th-PDA-set-bound} can also be generalized to yield the lower bound on the transmission load for schemes with any placement strategy and one-shot delivery, which is formally stated as the following Theorem.

\begin{theorem}\rm
\label{th-PDA-set-lower bound}
Given any one-shot $F$-division $(K,M,N)$ centralized coded caching system with the general identically uncoded placement and subpacketization level $F$, the optimal transmission load $R^*(F)$ satisfies
\begin{align}
	R^*(F)\geq \min_{\mathcal{A}_k\in \Omega,k\in[K]}     \max_{ (i_1,\ldots,i_{K})\in \mathcal{I}} \frac{\sum_{h=1}^{K}\Big|\bigcap_{j=1}^{h}\mathcal{A}_{i_j}\Big| }{F},\label{eq-lower-bound}
\end{align}where $\Omega$ is the set consisting of all the $(F -Z)$-subsets of
$[F]$,  $\mathcal{I}$ is the permutation set of $[K]$, and $\mathcal{A}_{i_j}$ is defined in \eqref{eq-colum-set}.
\end{theorem}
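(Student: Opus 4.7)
The plan is to obtain Theorem \ref{th-PDA-set-lower bound} as a direct two-step extension of Theorem \ref{th-PDA-set-bound}, by carefully separating the choice of placement from the choice of delivery. Theorem \ref{th-PDA-set-bound} already gives, for any \emph{fixed} data placement array $\mathbf{A}$, a lower bound on the transmission load of every one-shot delivery strategy compatible with $\mathbf{A}$. The new statement aims at all identically uncoded placements simultaneously, so the natural route is to fix an arbitrary scheme achieving $R^*(F)$, apply Theorem \ref{th-PDA-set-bound} to its underlying placement, and then minimize over all admissible placements.

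First, I would fix any one-shot $F$-division $(K,M,N)$ scheme with identically uncoded placement and transmission load $R$. By Lemma \ref{le-relationship} this scheme corresponds to some $(K,F,Z,S)$ PDA $\mathbf{P}$, and by erasing the integers in $\mathbf{P}$ one recovers its data placement array $\mathbf{A}$. Since each user caches exactly $Z=FM/N$ packets of each file, every column of $\mathbf{A}$ has exactly $Z$ stars, so the uncached-index sets
$$\mathcal{A}_k=\{j\in[F]\ |\ a_{j,k}\ne *\},\qquad k\in[K],$$
satisfy $|\mathcal{A}_k|=F-Z$, i.e.\ $\mathcal{A}_k\in\Omega$ for every $k$. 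Applying Theorem \ref{th-PDA-set-bound} to this placement yields
$$R\;\geq\;\max_{(i_1,\ldots,i_K)\in\mathcal{I}}\;\frac{1}{F}\sum_{h=1}^{K}\Big|\bigcap_{j=1}^{h}\mathcal{A}_{i_j}\Big|.$$

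Second, I would verify the converse correspondence: every tuple $(\mathcal{A}_1,\ldots,\mathcal{A}_K)\in\Omega^K$ arises as the uncached-set family of some identically uncoded placement, simply by letting user $k$ cache the packets indexed by $[F]\setminus\mathcal{A}_k$. Hence the family of placements available to any scheme designer is exactly $\Omega^K$. Taking the infimum of the inequality above over all admissible schemes therefore amounts to taking the minimum of the right-hand side over all $(\mathcal{A}_1,\ldots,\mathcal{A}_K)\in\Omega^K$, which yields \eqref{eq-lower-bound}.

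The principal obstacle is essentially one of quantifier management rather than new combinatorics: one must argue that the lower bound in Theorem \ref{th-PDA-set-bound} is \emph{uniform} in the placement (it is, since it was derived from the generic properties C$1$--C$3$ of any PDA compatible with $\mathbf{A}$), and that no additional structural constraints are imposed on $\Omega^K$ by the existence of a valid one-shot delivery (since even for placements admitting no small-$S$ PDA, the right-hand side of \eqref{eq-lower-bound} remains a valid lower bound on the delivery load). Once these two points are checked, the theorem follows without further calculation; the only care needed is to keep the $\min$--$\max$ interchange on the correct side, noting that minimizing the Theorem \ref{th-PDA-set-bound} bound over placements gives a lower bound on the minimum achievable $R$, not on the maximum.
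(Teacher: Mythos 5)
Your proposal is correct and follows essentially the same route the paper intends: Theorem \ref{th-PDA-set-lower bound} is presented there as a direct generalization of Theorem \ref{th-PDA-set-bound}, obtained exactly by applying the fixed-placement bound to the placement underlying any given scheme (whose uncached-index sets are necessarily $(F-Z)$-subsets, hence lie in $\Omega$) and then minimizing over all placements in $\Omega^K$. The only cosmetic remark is that the ``converse correspondence'' step is not needed for the lower bound --- it suffices that the actual scheme's sets lie in $\Omega^K$, so its Theorem \ref{th-PDA-set-bound} bound dominates the $\min$--$\max$ --- but including it does no harm.
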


{\color{black}Now let us take some examples to show that our bound is tight for some cases.
\begin{example}\rm
	\label{exam-th-4}
	When $K=4$, $F=6$ and $Z=3$, through exhaustive search over all  $3$-subsets of
	$\{1,2,\ldots,6\}$ by computer we have 
	$$\mathcal{A}_1=\{4,5,6\},\ \ \mathcal{A}_2=\{2,3,6\},\ \ \mathcal{A}_3=\{1,3,5\},\ \ \mathcal{A}_4=\{1,2,4\}$$
	and a sequence of sorting subsets $\mathcal{A}_1$, $\mathcal{A}_2$, $\mathcal{A}_3$ and $\mathcal{A}_4$. Then we have 
	\begin{align*}
		R^*(F)\geq& \min_{\mathcal{A}_k\in \Omega,k\in[4]}     \max_{ (i_1,\ldots,i_{4})\in \mathcal{I}} \frac{\sum_{h=1}^{4}\Big|\bigcap_{j=1}^{h}\mathcal{A}_{i_j}\Big|\    }{F}\\
		=&\frac{|\mathcal{A}_1|+|\mathcal{A}_1\bigcap\mathcal{A}_2|+|\mathcal{A}_1\bigcap\mathcal{A}_2\bigcap\mathcal{A}_3|+
			|\mathcal{A}_1\bigcap\mathcal{A}_2\bigcap\mathcal{A}_3\bigcap\mathcal{A}_4|}{6}\\
		=&\frac{4}{6}=\frac{2}{3}.
	\end{align*}We can construct the following optimal $(4,6,3,4)$ PDA which is exactly the MN PDA. 
	\begin{eqnarray*}
		\mathbf{P}_{6\times 4}=\left(\begin{array}{cccc}
			*    &*     &1    &2\\
			*    &1     &*    &3\\
			*    &2     &3    &*\\
			1    &*     &*    &4\\
			2    &*     &4    &*\\
			3    &4     &*    &*
		\end{array}\right).
	\end{eqnarray*} 
\end{example}

\begin{example}\rm  When $F=8$, $K=6$ and $Z=5$, through exhaustive search over all  $3$-subsets of 
	$\{1,2,\ldots,8\}$  by computer we have $\mathcal{A}_1=\{1,2,4\}$, $\mathcal{A}_2=\{2,3,5\}$,  $\mathcal{A}_3=\{3,4,6\}$, $\mathcal{A}_4=\{5,6,8\}$,  $\mathcal{A}_5=\{6,7,1\}$, $\mathcal{A}_6=\{6,8,2\}$ and a sequence of sorting subsets $\mathcal{A}_1$, $\mathcal{A}_2$, $\mathcal{A}_3$,  $\mathcal{A}_4$, $\mathcal{A}_5$ and $\mathcal{A}_6$. 
	Then we have 
	\begin{align*}
		R^*(F)\geq& \min_{\mathcal{A}_k\in \Omega,k\in[6]}     \max_{ (i_1,\ldots,i_{6})\in \mathcal{I}} \frac{\sum_{h=1}^{6}\Big|\bigcap_{j=1}^{h}\mathcal{A}_{i_j}\Big|\    }{F}\\
		=&\frac{|\mathcal{A}_1|+|\mathcal{A}_1\bigcap\mathcal{A}_2|+
			\cdots+|\mathcal{A}_1\bigcap\mathcal{A}_2\bigcap\mathcal{A}_3\bigcap\mathcal{A}_4\bigcap\mathcal{A}_5\bigcap\mathcal{A}_6|}{8}\\
		=&\frac{4}{8}=\frac{1}{2}.
	\end{align*}We can construct the following optimal $(6,8,5,5)$ PDA.   
	\begin{eqnarray*}
		\mathbf{P}_{8\times 6}=\left(\begin{array}{cccccc}
			1    &*     &*    &*   &4   &*\\
			2    &4     &*    &*   &*   &5\\
			*    &1     &2    &*   &*   &*\\
			3    &*     &4    &*   &*   &*\\
			*    &3     &*    &2   &*   &*\\
			*    &*     &5    &1   &3   &*\\
			*    &*     &*    &*   &2   &1\\
			*    &*     &*    &4   &*   &3
		\end{array}\right).
	\end{eqnarray*} 
\end{example}

}

Since for any two subsets $A$, $B$ of $[F]$, $\overline{\overline{A}\cup \overline{B}}=A\cap B$ holds,
Theorem \ref{th-PDA-set-lower bound} can also be written in the following way.

\begin{corollary}\rm
\label{co-PDA-set-lower bound1}
Given an $F$-division $(K,M,N)$ centralized coded caching system with identically uncoded placement and one-shot delivery and subpacketization level $F$, the optimal transmission load  $R^*(F)$ under any subpacketization level $F$ satisfies
\begin{align*}
	R^*(F)\geq K - \max_{\mathcal{A}_k\in \Omega,k\in[K]} \min_{(i_0,\ldots,i_{K\!-\!1})\in \mathcal{I} } \frac{\sum_{h=0}^{K-1}|\bigcup_{j=0}^{h}\overline{\mathcal{A}}_{i_j}|}{F}. 
\end{align*}
\end{corollary}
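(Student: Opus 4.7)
The plan is to show that Corollary \ref{co-PDA-set-lower bound1} is nothing more than a re-expression of Theorem \ref{th-PDA-set-lower bound} obtained by taking complements inside $[F]$. So I would carry out a purely algebraic manipulation of the right-hand side of \eqref{eq-lower-bound}, using De Morgan's law together with the elementary identities $\max(-x)=-\min(x)$ and $\min(c-x)=c-\max(x)$.

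First, I would fix a choice of subsets $\mathcal{A}_1,\ldots,\mathcal{A}_K\in\Omega$ and a permutation $(i_1,\ldots,i_K)\in\mathcal{I}$. Since every $\mathcal{A}_{i_j}\subseteq[F]$, De Morgan's law gives
\begin{equation*}
\Bigl|\bigcap_{j=1}^{h}\mathcal{A}_{i_j}\Bigr|
=F-\Bigl|\overline{\bigcap_{j=1}^{h}\mathcal{A}_{i_j}}\Bigr|
=F-\Bigl|\bigcup_{j=1}^{h}\overline{\mathcal{A}}_{i_j}\Bigr|,
\end{equation*}
where the complement is taken inside $[F]$. Summing over $h=1,\ldots,K$ and dividing by $F$,
\begin{equation*}
\frac{\sum_{h=1}^{K}\bigl|\bigcap_{j=1}^{h}\mathcal{A}_{i_j}\bigr|}{F}
=K-\frac{\sum_{h=1}^{K}\bigl|\bigcup_{j=1}^{h}\overline{\mathcal{A}}_{i_j}\bigr|}{F}.
\end{equation*}

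Next, I would take the maximum over $(i_1,\ldots,i_K)\in\mathcal{I}$ on both sides. Since $K$ is a constant and $\max(c-f)=c-\min f$, the right-hand side becomes $K$ minus the minimum over $\mathcal{I}$ of the corresponding sum of union sizes. Then I would take the minimum over $\mathcal{A}_1,\ldots,\mathcal{A}_K\in\Omega$; observing that $\overline{\mathcal{A}}_k$ ranges over all $Z$-subsets of $[F]$ exactly when $\mathcal{A}_k$ ranges over $\Omega$, and again using $\min(c-f)=c-\max f$, this min becomes $K$ minus a max over the same family. Plugging this back into Theorem \ref{th-PDA-set-lower bound} yields
\begin{equation*}
R^*(F)\geq K-\max_{\mathcal{A}_k\in\Omega,k\in[K]}\min_{(i_1,\ldots,i_K)\in\mathcal{I}}\frac{\sum_{h=1}^{K}\bigl|\bigcup_{j=1}^{h}\overline{\mathcal{A}}_{i_j}\bigr|}{F}.
\end{equation*}
A trivial reindexing $h'=h-1$, $j'=j-1$ (with the index set $\{0,1,\ldots,K-1\}$ for the permutation) turns this into the form stated in Corollary \ref{co-PDA-set-lower bound1}.

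There is essentially no hard step here; the whole argument is a mechanical exchange of intersections for unions via complementation, coupled with the sign flip that turns a max into a min. The only point requiring any care is making sure the min/max over $\Omega^K$ is preserved when replacing each $\mathcal{A}_k$ by $\overline{\mathcal{A}}_k$, which is immediate since complementation is a bijection on the collection of $(F-Z)$-subsets of $[F]$ onto the collection of $Z$-subsets (and the constraint we write in the corollary keeps $\mathcal{A}_k\in\Omega$ rather than $\overline{\mathcal{A}}_k\in\Omega$, so in fact no bijection is even needed — one simply relabels the dummy variable).
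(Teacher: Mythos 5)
Your proposal is correct and matches the paper's own (one-line) justification, which likewise derives the corollary from Theorem \ref{th-PDA-set-lower bound} purely via De Morgan's law and the identities $\max(c-f)=c-\min f$, $\min(c-f)=c-\max f$. Your more explicit write-out, including the observation that the outer optimization variable can simply be relabeled so that no bijection between $(F-Z)$-subsets and $Z$-subsets is needed, is a faithful elaboration of the same argument.
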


\begin{remark}
The proposed lower-bound framework can be extended to derive transmission load lower bounds for both D2D and combination network settings. It should be noted, however, that a direct application of the current bound to these scenarios yields relatively loose lower bounds, primarily because the present analysis does not fully account for the additional combinatorial constraints inherent in D2D and combination networks. These structural constraints substantially increase the complexity of deriving tight lower bounds.
\end{remark}

\section{The proof of Theorem \ref{th-second-optimality}}
\label{sec-optimal-2}  
In Construction \ref{constr-3}, for each $(u,v)\in \mathcal{K}$ from \eqref{eq-constr-PDA-OA} and \eqref{eq-colum-set} we have
\begin{align}
\label{eq-partition-A}
\mathcal{A}_{u,v}=\{{\bf f}=(f_1,f_2,\ldots,f_{m+1})\in \mathcal{F}\ |\ f_u\neq v\}
\end{align}

By Theorem \ref{th-PDA-set-bound}, the key point is to compute the value
\begin{align}
\label{eq-S*}
S^{*}=\max_{ (i_1,\ldots,i_{K})\in \mathcal{I}} \sum_{h=1}^{K}\Big|\bigcap_{j=1}^{h}\mathcal{A}_{i_j}\Big|.
\end{align} Determining the maximum requires considering all possible permutations of the index set $[K]$. However, individually checking these permutations is computationally infeasible. For example, when $q=5$ and $m=3$, the total number of permutations is $|\mathcal{I}|=((3+1)\times 5)!=2\times 10^{18}$, i.e., there are precisely  $2\times 10^{18}$ different permutations of $\{1,2,\ldots,15\}$. Therefore, to obtain a tighter lower bound on the optimal $S^{*}$, we must address the following two critical challenges: 1) Which permutation of the user set $\{(u,v)|u\in[m+1], v\in[1]\}$ is the best one?  2) Given the optimal permutation, say $$((u_1,v_1), (u_2,v_2),\ldots,(u_{(m+1)q},v_{(m+1)q})),$$  how can we explicitly compute the value of the intersections of the sets $\mathcal{A}_{u_i, v_i}$ for each $i= 1, ..., (m+1)q$?

To address these two challenges, we will first demonstrate that the initial $m$ sets can be $\mathcal{A}_{1,v_1}$, $\mathcal{A}_{2,v_2}$, $\ldots$, $\mathcal{A}_{m,v_m}$, where $v_i\in[q]$ for each $i\in[m]$. We then establish that the $(m+1)$th set must be $\mathcal{A}_{m+1,h}$ for some $h\in[q]$. However, it is a difficult task to derive an explicit expression of the total number of different possible integers for any parameters $q$ and $m$, due to the fact that the following sets are chosen according to the parameters $m$ and $q$. Ultimately, we devise an appropriate selection strategy that exploits the algebraic structure of the problem, allowing us to propose an explicit expression for the sum of the sizes of the intersections for any parameters $q$ and $m$.  Numerical comparisons subsequently show that the performance of our chosen permutation is close to that of the optimal permutation. As outlined in the introduction, our proof can be divided into two parts: the selection of the first $m$ sets and the selection of the remaining sets. 

\subsection{Selection of The First $m$ Sets}
\label{sub-Select-first}
The selection of the first $m$ sets is derived from the following investigation. For any integer $u\in[m+1]$, the intersection of any $z$ sets where $z\in[q]$, say $\mathcal{A}_{u,v_1}$, $\mathcal{A}_{u,v_2}$, $\mathcal{A}_{u,v_z}$, is 
\begin{align*}
\{(f_1,f_2,\ldots,f_{m+1})\in \mathcal{F}\ |\ f_{u}\in[q]\setminus\{v_1,v_2,\ldots,v_z\}\}
\end{align*}where $\mathcal{F}$ is defined in $(4)$. Clearly $|\cap_{j\in[z]}\mathcal{A}_{u,v_j}|=(q-z)q^{m-1}$. Since $\mathcal{F}$ forms a $[m+1,m]$ maximum distance separable code, any $m$ coordinates can determine the last coordinate. So, in the following, we only need to consider the first $m$ coordinates. For any $z$ different integers $u_1$, $u_2$, $\ldots$, $u_{z}\in[m]$ where $z\in[m]$, the intersection of any $z$ sets, say $\mathcal{A}_{u_1,v_1}$, $\mathcal{A}_{u_2,v_2}$, $\mathcal{A}_{u_z,v_z}$, is
\begin{align*}
\mathcal{E}_{z}=\{(f_1,f_2,\ldots,f_{m+1})\in \mathcal{F}\ |\ f_{u_j}\in[q]\setminus\{v_j\},j\in[z]\}.
\end{align*} We can compute $\left|\mathcal{E}_{z}\right|=(q-1)^{z}q^{m-z}$. By computing the ratio $|\cap_{j\in[z]}\mathcal{A}_{u,v_j}|/|\mathcal{E}_{z}|=\frac{(q-z)q^{z-1}}{(q-1)^z}$, the following statement can be obtained.  
\begin{proposition}
\label{pro-m-order-set}
For every integer $z$ with $1< z\leq q$, we have $\frac{(q-z)q^{z-1}}{(q-1)^z}<1$.
\hfill $\square$ 
\end{proposition}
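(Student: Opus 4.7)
The plan is to reduce the target inequality to the classical Bernoulli inequality. Dividing both sides of $(q-z)q^{z-1}<(q-1)^z$ by the positive quantity $q^z$ transforms the claim into the equivalent form
\begin{equation*}
1-\frac{z}{q}<\left(1-\frac{1}{q}\right)^{z},
\end{equation*}
which is exactly a strict Bernoulli inequality with $x=-1/q$ and integer exponent $z$. Since the original statement of Construction \ref{constr-3} and Theorem \ref{th-second-optimality} fixes $q\geq 2$, we always have $-1<x<0$.

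First I would dispose of the boundary case $z=q$: there the left-hand side of the original inequality equals $0$ while $(q-1)^q>0$, so the claim is immediate. This restricts attention to $2\leq z\leq q-1$, a regime in which both sides of the reformulated inequality are positive and the Bernoulli comparison is meaningful. Then I would invoke the strict form of Bernoulli's inequality: for any real $x>-1$ with $x\neq 0$ and any integer $n\geq 2$, $(1+x)^{n}>1+nx$. Plugging in $x=-1/q$ and $n=z$ (both hypotheses are met because $q\geq 2$ gives $x\neq 0$ and the assumption $z>1$ together with $z$ integer gives $z\geq 2$) yields the desired strict inequality; multiplying back by $q^z$ recovers the original form. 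An equally short alternative is induction on $z$, with base case $z=2$ reducing to $(q-2)q<(q-1)^2$, i.e.\ $0<1$, and inductive step using $q^z\geq(q-1)^z$:
\begin{align*}
(q-z-1)q^{z} &= (q-z)q^{z}-q^{z} \\
&< q\,(q-1)^{z}-q^{z} \\
&\leq q\,(q-1)^{z}-(q-1)^{z}=(q-1)^{z+1}.
\end{align*}

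There is essentially no hard step: the proposition is an elementary numerical inequality that is needed only as a quick lemma before comparing the sizes $|\bigcap_{j\in[z]}\mathcal{A}_{u,v_j}|$ and $|\mathcal{E}_z|$ in the selection analysis of the first $m$ sets. The only point that requires care is preserving strictness, and this is automatic in both approaches, since strict Bernoulli holds precisely when $n\geq 2$ and $x\neq 0$, and the base case of the induction is already strict. I expect the proof to occupy only a few lines in the final write-up.
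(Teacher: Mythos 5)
Your proof is correct, but it follows a genuinely different route from the paper's. The paper (Appendix~\ref{proof-pro-m-order-set}) works with the ratio function $\phi(z)=\frac{(q-z)q^{z-1}}{(q-1)^z}$ and shows it is non-increasing by computing the quotient of consecutive terms, $\frac{\phi(z+1)}{\phi(z)}=\frac{q}{q-1}\cdot\frac{q-z-1}{q-z}=1-\frac{z-2}{(q-1)(q-z)}\leq 1$, and then anchors the argument at $\phi(2)=1-\frac{1}{(q-1)^2}<1$. You instead divide through by $q^z$ and recognize the resulting inequality $1-\frac{z}{q}<\left(1-\frac{1}{q}\right)^z$ as the strict Bernoulli inequality with $x=-1/q$ and exponent $z\geq 2$, after disposing of the trivial boundary case $z=q$. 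Both arguments are complete and elementary; your Bernoulli reduction is arguably the more conceptual one, identifying the inequality as an instance of a classical result rather than verifying it by a bespoke computation, while the paper's ratio computation has the minor side benefit of quantifying how fast $\phi$ decays in $z$ (the decrement $\frac{z-2}{(q-1)(q-z)}$), information that is not used elsewhere. Your secondary induction argument is in fact closer in spirit to the paper's monotonicity step — both propagate the inequality from $z$ to $z+1$ — differing only in that you manipulate the inequality additively where the paper compares terms multiplicatively. All strictness issues are handled correctly in your write-up.
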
The detailed proof is included in Appendix $A$. Proposition \ref{pro-m-order-set} tells us that each of the first $m$ sets must come from a unique family $\{\mathcal{A}_{u,1},\mathcal{A}_{u,2},\ldots,\mathcal{A}_{u,q}\}$ where $u\in[m]$. In addition,  the algebra structure of the placement strategy (i.e. eq. \eqref{eq-consr-3}) allows us to arrange the first $m$ sets in the following order: $\mathcal{A}_{1,q}$, $\mathcal{A}_{2,q}$, $\ldots$, $\mathcal{A}_{m,q}$.

So we have the following result.
\begin{align}
S&\geq |\mathcal{A}_{1,q}|+|\mathcal{A}_{1,q}\bigcap\mathcal{A}_{2,q}|+\cdots+|\bigcap_{u\in[m]}\mathcal{A}_{u,q}|\nonumber\\
&= (q-1)q^{m-1}+(q-2)^2q^{m-2}+\cdots+(q-1)^m\nonumber\\
&=\sum_{u=1}^{m}(q-1)^uq^{m-u}.\label{eq-comput-paratition-1}	
\end{align}  For instance,  consider the star placement in \eqref{Eq:matrix1} of Example \ref{exm-partition}. We have
\begin{align*}
\mathcal{A}_{1,1}&=\{(2,1,3),(3,1,1),(2,2,1),(3,2,2),(2,3,2),(3,3,3)\},\\
\mathcal{A}_{1,2}&=\{(1,1,2),(3,1,1),(1,2,3),(3,2,2),(1,3,1),(3,3,3)\},\\
\mathcal{A}_{1,3}&=\{(1,1,2),(2,1,3),(1,2,3),(2,2,1),(1,3,1),(2,3,2)\},\\
\mathcal{A}_{2,1}&=\{(1,2,3),(2,2,1),(3,2,2),(1,3,1),(2,3,2),(3,3,3)\},\\
\mathcal{A}_{2,2}&=\{(1,1,2),(2,1,3),(3,1,1),(1,3,1),(2,3,2),(3,3,3)\},\\
\mathcal{A}_{2,3}&=\{(1,1,2),(2,1,3),(3,1,1),(1,2,3),(2,2,1),(3,2,2)\},\\
\mathcal{A}_{3,1}&=\{(1,1,2),(2,1,3),(1,2,3),(3,2,2),(2,3,2),(3,3,3)\},\\
\mathcal{A}_{3,2}&=\{(2,1,3),(3,1,1),(1,2,3),(2,2,1),(1,3,1),(3,3,3)\},\\
\mathcal{A}_{3,3}&=\{(1,1,2),(3,1,1),(2,2,1),(3,2,2),(1,3,1),(2,3,2)\}. 
\end{align*}
Now consider the vector $(3,3)$ of size $m=2$. First, we choose the subset $\mathcal{A}_{1,3}$ as the first set and have the following result from \eqref{eq-max}.
\begin{align*}
S\geq S_1=|\mathcal{A}_{1,3}|=(3-1)\times 3^{2-1}=6.
\end{align*}Secondly, we choose the subset $\mathcal{A}_{2,2}$ and obtain the following result from \eqref{eq-max}.
\begin{align*}
S\geq S_1+S_2=S_1+\left|\left(\mathcal{A}_{1,3}\bigcap\mathcal{A}_{2,3}\right)\right|=(3-1)\times 3^{2-1}+(3-1)^{2}\times 3^{2-2}=6+4=10. 
\end{align*}

\subsection{Selection of the Remaining Sets}
\label{sub-Select-last}
Now, let us consider the $(m+1)$-th set. For each integer $u\in[m]$ and any integer $v\in[q-1]$, we have the intersection
\begin{align*}
\mathcal{E}_{m}\cap \mathcal{A}_{u,v}=\{(f_1,\ldots,f_{m+1})\in\mathcal{F}\ |\ f_i\in [q-1],  i\in [m]\setminus\{u\}, f_u\in[q-1]\setminus\{v\}\}.
\end{align*}It is easy to check that $|\mathcal{E}_{m}\cap \mathcal{A}_{u,v}|=(q-1)^{m-1}(q-2)$. In addition, for each integer $v\in[q]$, we have the intersection  
\begin{align*}
\mathcal{E}_{m}\bigcap \mathcal{A}_{m+1,v}=\{(f_1,\ldots,f_{m})\in \mathcal{F}\ |\ f_i\in [q-1],  i\in [m],f_{m+1}\in[q]\setminus\{v\}\}.
\end{align*}Since $\left|\mathcal{E}_{m}\right|=(q-1)^m$, there must exist an integer $h\in[q]$ satisfying $|\mathcal{E}_{m}\cap \mathcal{A}_{m+1,h}|\geq \frac{(q-1)^m}{q}$. It is not difficult to check that $(q-2)q^{m-1}<\frac{(q-1)^m}{q}$ always holds. This implies that the $(m+1)$-th set must be chosen from  $\{\mathcal{A}_{m+1,1}, \mathcal{A}_{m+1,2}, \ldots,  \mathcal{A}_{m+1,q}\}$. 

To derive an explicit expression for the total sum of intersection sizes for general parameters $q$ and $m$, we choose the remaining sets from $\{\mathcal{A}_{m+1,1}, \mathcal{A}_{m+1,2}, \ldots,  \mathcal{A}_{m+1,q}\}$ based on the underlying algebraic structure. We emphasize that our approximating computation is close to the exact value $S^{*}$ defined in \eqref{eq-S*}. When $q=3,4,5$ and $m=2,\ldots,8$, we can obtain the ratio $\mu=S^{*}/S$ in Fig. $2$. In Fig. $2$, the $x$-axis represents the value of $m$, and the $y$-axis represents the ratio of our value to the maximum value. The green, blue, and red lines represent the values $q=3,4,5$, respectively. We can see that when $q$ is smaller, the ratio is closer to $1$. Even when $q$ is large, the ratio remains close to $1$ as $m$ increases. 		
\begin{figure}[t]
\centering
\includegraphics[width=0.4\textwidth]{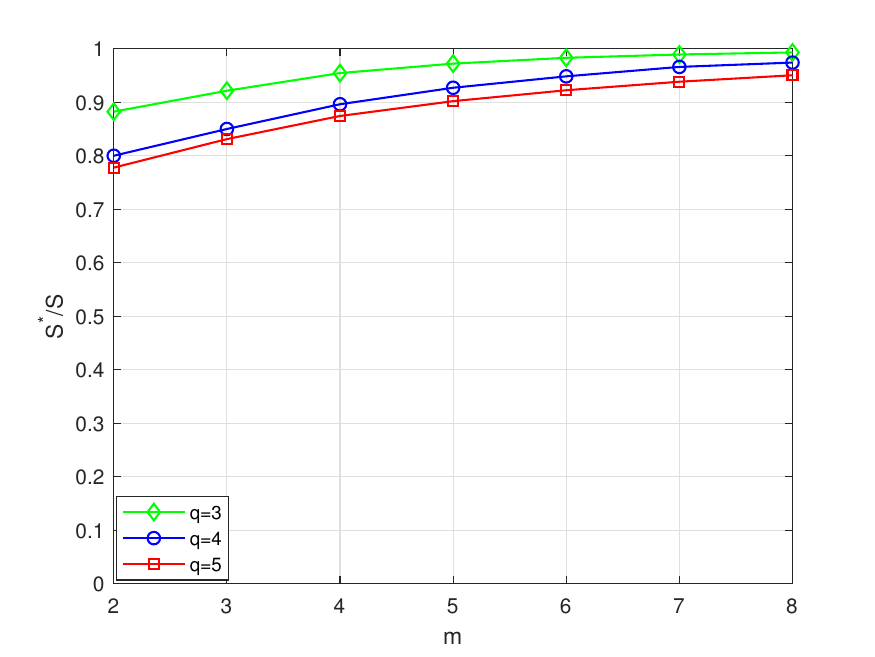}
\caption{The ratios $\mu$ of $S^{*}$ and our derived $S$ with $q=3,4,5$.}
\label{fig-ratio}
\end{figure} 

By the above introduction, we now introduce our detailed computations. Using the algebra structure of $\mathcal{F}$, we will sort the last $q$ subsets $\mathcal{A}_{m+1,q}$,  $\mathcal{A}_{m+1,1}$, $\ldots$,  $\mathcal{A}_{m+l,q-1}$ and take intersections successively.  First, let us take the star placement strategy in \eqref{eq-constr-PDA-OA}	again to introduce our sorting method. From \eqref{eq-comput-paratition-1}, the intersection of  $\mathcal{A}_{1,3}$ and $\mathcal{A}_{2,3}$ is the following set.
\begin{align*}
\mathcal{B}=\{(1,1,2), (2,1,3),(1,2,3),(2,2,1)\}=\{(f_1,f_2,f_1+f_2)|f_1,f_2\in[2]\}.
\end{align*}Using the rows from  $\mathcal{B}$ and the columns derived from the selected subsets $\mathcal{A}_{1,3}$ and $\mathcal{A}_{2,3}$, as well as the remaining subsets $\mathcal{A}_{3,1}$, $\mathcal{A}_{3,2}$ and $\mathcal{A}_{3,3}$, we can obtain the following subarray.
\begin{equation*} 
\bordermatrix{	
&(1,3) &(2,3)      &(3,1)      &(3,2)        &(3,3)\cr
(1,1,2)&13&5&2&*&1 \cr
(2,1,3)&14&12&8&7&*\cr\hline
(1,2,3)&15&6&3&4&*\cr
(2,2,1)&16&11&*&10&9 }=\mathbf{P}'.	
\end{equation*}Clearly we can first choose  $\mathcal{A}_{3,1}$ and then choose $\mathcal{A}_{3,2}$. Then we have
\begin{align*}
S\geq S_1+S_2+\left|\left(\mathcal{A}_{1,3}\bigcap\mathcal{A}_{2,3}\bigcap\mathcal{A}_{3,1}\right)\right|
+\left|\left(\mathcal{A}_{1,3}\bigcap\mathcal{A}_{2,3}\bigcap\mathcal{A}_{3,1}\bigcap\mathcal{A}_{3,2}\right)\right|
=10+3+2=15. 
\end{align*}

Next, let's introduce a key important algebraic property of the $(m+1)$-th coordinate of each vector in $\mathcal{F}$ for the general case. Specifically, the following statement can be directly obtained since $[q]$ forms a cyclic additive group under the addition operation and modulo $q$ operation.
\begin{proposition}\rm
\label{pro-1}
When the values of the $2$nd, $3$rd, $\ldots$, $m$-th coordinates of the vector ${\bf f}$ are fixed, the $(m+1)$-th coordinate will take on $q-1$ different values as the first coordinate varies from $1$ to $q-1$.	\hfill $\square$ 
\end{proposition}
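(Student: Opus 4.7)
The plan is to prove Proposition~\ref{pro-1} by unpacking the definition of the $(m+1)$-th coordinate and then invoking the bijectivity of a translation map in the cyclic group $\mathbb{Z}/q\mathbb{Z}$ (identified, under the paper's convention for $\langle\cdot\rangle_q$, with the set $[q]$).

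First, I would fix arbitrary values of $f_2,f_3,\ldots,f_m\in[q]$ and set $c=\langle f_2+f_3+\cdots+f_m\rangle_q$, which is a constant independent of $f_1$. By the definition of $\mathcal{F}$ in \eqref{eq-consr-3} together with the convention $\langle\cdot\rangle_q$, the $(m+1)$-th coordinate of ${\bf f}$ equals
\begin{equation*}
f_{m+1}=\Bigl\langle f_1+f_2+\cdots+f_m\Bigr\rangle_q=\langle f_1+c\rangle_q.
\end{equation*}
Thus the map sending $f_1$ to $f_{m+1}$ is precisely the translation $\tau_c\colon x\mapsto\langle x+c\rangle_q$ on $[q]$.

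Next, I would observe that $\tau_c$ is a bijection on $[q]$: this is immediate from the fact that $([q],+_q)$ (with $\langle\cdot\rangle_q$ giving the group representatives) is the cyclic additive group of order $q$, so translation by any fixed element is a permutation of the group. Therefore restricting $\tau_c$ to the subset $\{1,2,\ldots,q-1\}\subseteq[q]$ — which is exactly the range of $f_1$ in the statement — yields an injection, hence $q-1$ distinct output values.

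I do not anticipate a real obstacle here; the only thing to watch out for is the non-standard residue convention ($\langle a\rangle_q=q$ when $q\mid a$ rather than $0$), which must be handled cleanly when arguing that $\tau_c$ is well-defined as a map $[q]\to[q]$ and is a bijection. Once that bookkeeping is in place, injectivity on $[q-1]$ follows trivially: if $\langle f_1+c\rangle_q=\langle f_1'+c\rangle_q$ with $f_1,f_1'\in[q-1]$, then $f_1\equiv f_1'\pmod{q}$, and since both lie in $\{1,\ldots,q-1\}$ they must be equal.
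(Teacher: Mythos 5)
Your proposal is correct and follows essentially the same route as the paper, which simply asserts the claim from the fact that $[q]$ forms a cyclic additive group under addition modulo $q$; you have merely spelled out the translation-map/bijectivity argument (including the careful handling of the convention $\langle a\rangle_q=q$ when $q\mid a$) that the paper leaves implicit.
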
According to Proposition \ref{pro-1}, the following notation is useful. For any $q-1$ integers $f_2$, $\ldots$, $f_m\in [q-1]$, define
\begin{align}
\mathcal{F}_{f_2,\ldots,f_m}=\{(i,f_2,f_3,\ldots,f_m, i+\sum_{j\in[2:m]}f_j)|i\in [q-1]\}. 
\label{eq-key-m+1-property}
\end{align}By Proposition \ref{pro-1}, we have $|\mathcal{F}_{f_2,\ldots,f_m}|=q-1$.  For any  $l$ different integers $i_1$, $i_2$, $\ldots$, $i_{l}\in[q]$, define
\begin{align*} 
\mathcal{A}_{m+1,i_1,i_2,\ldots,i_l}=\mathcal{A}_{m+1,i_1}\bigcap \mathcal{A}_{m+1,i_2}\bigcap\cdots\bigcap \mathcal{A}_{m+1,i_l}.
\end{align*}Using the notations of $\mathcal{F}_{f_2,\ldots,f_m}$ and $\mathcal{A}_{m+1,i_1,i_2,\ldots,i_l}$, the following useful result can be obtained.
\begin{lemma}\rm
\label{lemma-key-m+1}
For any $m-1$ integers $f_2$, $\ldots$, $f_m\in [q-1]$ in \eqref{eq-key-m+1-property} and $l$ different integers $i_1$, $i_2$, $\ldots$, $i_{l}\in[q]$ where $l\in[q-1]$, we have
\begin{align*}
\left|\mathcal{A}_{m+1,i_1,i_2,\ldots,i_l}\bigcap\mathcal{F}_{f_2,\ldots,f_m}\right|
=
\left\{
\begin{array}{cc}
	q-l&\ \ \text{if}\ \sum_{j\in[2:m]}f_j=i_{\lambda}, \exists \lambda\in[l]\\
	q-l-1& \text{otherwise}.
\end{array}	
\right.\nonumber
\end{align*}
\end{lemma}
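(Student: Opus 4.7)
The plan is to reduce the intersection cardinality to a simple counting statement about the last coordinate of vectors in $\mathcal{F}_{f_2,\ldots,f_m}$. Write $s=\langle \sum_{j\in[2:m]}f_j\rangle_q$. By the defining formula of $\mathcal{F}_{f_2,\ldots,f_m}$ every element has the form $(i,f_2,\ldots,f_m,\langle i+s\rangle_q)$ for some $i\in[q-1]$, and by \eqref{eq-partition-A} the set $\mathcal{A}_{m+1,i_1,\ldots,i_l}=\bigcap_{\lambda\in[l]}\mathcal{A}_{m+1,i_\lambda}$ consists exactly of those ${\bf f}\in\mathcal{F}$ whose last coordinate does not belong to $\{i_1,\ldots,i_l\}$. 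Hence
\begin{align*}
\left|\mathcal{A}_{m+1,i_1,\ldots,i_l}\cap\mathcal{F}_{f_2,\ldots,f_m}\right|
=\left|\{i\in[q-1]:\langle i+s\rangle_q\notin\{i_1,\ldots,i_l\}\}\right|.
\end{align*}

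The first step is to identify the image of the shift map $i\mapsto\langle i+s\rangle_q$ on $[q-1]$. Since $i$ ranges over $q-1$ consecutive integers and addition is taken modulo $q$, this map is a bijection from $[q-1]$ onto $[q]\setminus\{s\}$; this is a mild strengthening of Proposition~\ref{pro-1}, specifying which value is missed. Next I would split into two cases. If $s\in\{i_1,\ldots,i_l\}$---which, read under the implicit $\langle\cdot\rangle_q$ reduction that the definition of $\mathcal{F}$ uses on the last coordinate, is precisely the lemma's hypothesis $\sum_{j\in[2:m]}f_j=i_\lambda$---then $\{i_1,\ldots,i_l\}\cap([q]\setminus\{s\})$ has cardinality $l-1$, so the count is $(q-1)-(l-1)=q-l$. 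Otherwise $\{i_1,\ldots,i_l\}\subseteq[q]\setminus\{s\}$, so the count is $(q-1)-l=q-l-1$. These two outcomes are exactly the two branches stated in the lemma.

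The main obstacle is bookkeeping rather than substance: in the definition of $\mathcal{F}$ the last coordinate is written as an unreduced sum $\sum_{i\in[m]}f_i$ but is required to lie in $[q]$, so the modular convention $\langle\cdot\rangle_q$ must be invoked consistently both when listing the elements of $\mathcal{F}_{f_2,\ldots,f_m}$ and when interpreting the equality $\sum_{j\in[2:m]}f_j=i_\lambda$. The mathematical content---that the shift $i\mapsto\langle i+s\rangle_q$ on $[q-1]$ misses exactly the residue $s$---is a one-line consequence of the cyclic group structure on $[q]$, and once that identification is fixed the lemma follows from a trivial inclusion/exclusion within a cyclic orbit of length $q-1$.
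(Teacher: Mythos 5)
Your proposal is correct and follows essentially the same route as the paper's proof: both reduce the count to the values taken by the last coordinate, observe that as $i$ ranges over $[q-1]$ the coordinate $\langle i+s\rangle_q$ sweeps out exactly $[q]\setminus\{s\}$ (the paper phrases this as the contradiction $i=q$ together with Proposition~\ref{pro-1}), and then split on whether $s=\sum_{j\in[2:m]}f_j$ lies in $\{i_1,\ldots,i_l\}$. Your version is marginally more explicit about the bijection onto $[q]\setminus\{s\}$ and the modular bookkeeping, but the substance is identical.
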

\begin{proof}
From \eqref{eq-partition-A}, we know that a vector ${\bf f}=(f_1$, $f_2$, $\ldots$, $f_{m+1})$ belongs to $\mathcal{A}_{m+1,i_1,i_2,\ldots,i_l}$ if and only $f_{m+1}\in[q]\setminus\{i_1,i_2,\ldots,i_l\}$. This implies that $f_{m+1}$ has exactly $q-l$ possible values. Let $h=\sum_{j\in[2:m]}f_j$. When $h\in \{i_1,i_2,\ldots,i_l\}$, without loss of generality, we assume that $h=i_1$. By the definition of $\mathcal{F}_{f_2,\ldots,f_m}$ in \eqref{eq-key-m+1-property}, we have
$$f_{m+1}=i+\sum_{j\in[2:m]}f_j=h.$$
So we have $i=q$  that contradicts our hypothesis, i.e., $i\neq q$. So $\mathcal{F}_{f_2,\ldots,f_m}\subseteq \mathcal{A}_{m+1,h}$ always holds. Then we have that $	|\mathcal{A}_{m+1,i_1,i_2,\ldots,i_l}\cap\mathcal{F}_{f_2,\ldots,f_m}|=q-l$.

When there is no integer equal to $h$, we have $h\in [q]\setminus\{i_1,i_2,\ldots,i_l\}$. In addition, the integer in the $(m+1)$th coordinate of the vector from $\mathcal{F}_{f_2,\ldots,f_m}$ does not equal $h$. Then the integer in the $(m+1)$-th coordinate of the vector from  $\mathcal{A}_{m+1,i_1,i_2,\ldots,i_l}\cap\mathcal{F}_{f_2,\ldots,f_m}$ can only choose the integer of $[q]\setminus\{i_1,i_2,\ldots,i_l,h\}$. So we have $$\left|\mathcal{A}_{m+1,i_1,i_2,\ldots,i_l}\bigcap\mathcal{F}_{f_2,\ldots,f_m}\right|=q-l-1.$$ Thus the proof is completed.
\end{proof}

Recall that there are $(q-1)^{m-1}$ different $(m-1)$-vectors ${\bf f}'=(f_2,f_3,\ldots,f_m)$ where $f_j\in [q-1]$ for each $j\in [2:m]$, and the sum of each coordinate of these vectors belongs to the set $[q]$. Define
\begin{align*}
\mathcal{E}&=\{{\bf f}=(f_{1},\ldots,f_{m}, \sum_{j\in[m]} f_j) | f_j\in [q-1],j\in[m]\},\\
\mathcal{C}_i&=\{{\bf f}'=(f_{2},\ldots,f_{m})|\sum_{j\in[2:m]}f_j=i, f_j\in [q-1],j\in[m]\}
\end{align*} where $i\in[q]$. 
When $q=3$ and $m=2$ in Example \ref{exm-partition}, we have
$\mathcal{E}=\{(1,1,2),(2,1,3),(1,2,3),(2,2,1)\}$, $\mathcal{C}_1=\{(1)\}$, $\mathcal{C}_2=\{(2)\}$ and $\mathcal{C}_3=\emptyset$. 
Clearly, $\mathcal{E}=\cap_{l=1}^{m}\mathcal{A}_{l,q}$ and the intersection of any two different $\mathcal{C}_i$ and $\mathcal{C}_{i'}$ is empty. In Lemma \ref{lemma-key-m+1}, when $l=1$ and $\sum_{j\in[2:m]}f_j=i$ for any vector ${\bf f}'\in \mathcal{C}_{i}$ where $i\in[q]$, we have $|\mathcal{A}_{m+1,i}\cap\mathcal{F}_{f_2,\ldots,f_m}|=q-1$, i.e., the vector set $\mathcal{F}_{{\bf f}'}\subseteq\mathcal{A}_{m+1,i}$. This implies that $\cup_{{\bf f}'\in\mathcal{C}_{i}}\mathcal{F}_{{\bf f}'}\subseteq \mathcal{A}_{m+1,i}$ for each integer $i\in[q]$ always holds; for any vector ${\bf f}''\in \mathcal{C}_{i'}$ where  $i'\in[q]\setminus\{i\}$, $|\mathcal{F}_{{\bf f}''}\cap\mathcal{A}_{i}|=q-2$ always holds. In addition, the $$\bigcup_{{\bf f}'\in\mathcal{C}_1}\mathcal{F}_{{\bf f}'},\  \bigcup_{{\bf f}'\in\mathcal{C}_2}\mathcal{F}_{{\bf f}'},\  \ldots,\  \bigcup_{{\bf f}'\in\mathcal{C}_q}\mathcal{F}_{{\bf f}'}$$ are exactly the partition of $\mathcal{E}$. So for each integer $i$, the cardinality of $\mathcal{E}\cap\mathcal{A}_{m+1,i}$ is
\begin{align}
\label{eq-number-1}	
|\mathcal{C}_{i}|(q-1)+\sum_{i'\in [q]\setminus\{i\}}|\mathcal{C}_{i'}|(q-2).
\end{align}

From \eqref{eq-number-1}, we will sort the last $q$ subsets $\mathcal{A}_{m+1,q}$,  $\mathcal{A}_{m+1,1}$, $\ldots$,  $\mathcal{A}_{m+l,q-1}$ according to the cardinality of $\mathcal{C}_i$ where $i\in[q]$. According to the value of $m$, we can compute the cardinality of $\mathcal{C}_i$ for each $i\in[q]$. The resulting formula is presented below, with its proof given in Appendix \ref{proof of pro-2}.
\begin{proposition}\rm
\label{pro-2}
\begin{itemize}
\item When $m$ is even, there are exactly $q-1$ different integers $h_i\in [q]$ satisfying that $$|\mathcal{C}_{h_i}|=\left\lceil\frac{(q-1)^{m-1}}{q}\right\rceil=\frac{(q-1)^{m-1}+1}{q}$$ for each $i\in[q-1]$ and exactly one integer $h_q$ satisfying that $$|\mathcal{C}_{h_q}|=\left\lfloor\frac{(q-1)^{m-1}}{q}\right\rfloor=\frac{(q-1)^{m-1}-q+1}{q};$$
\item When $m$ is odd, there are exactly $q-1$ different integers $h_i\in [q]$ satisfying that $$|\mathcal{C}_{h_i}|=\left\lfloor\frac{(q-1)^{m-1}}{q}\right\rfloor=\frac{(q-1)^{m-1}-1}{q}$$ for each $i\in[q-1]$ and exactly one integer $h_q$ satisfying that $$|\mathcal{C}_{h_q}|=\left\lceil\frac{(q-1)^{m-1}}{q}\right\rceil=\frac{(q-1)^{m-1}+q-1}{q}.$$
\end{itemize}
\end{proposition}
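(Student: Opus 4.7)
The proof plan is to recognize $|\mathcal{C}_i|$ as a coefficient of $f(x)^{m-1}$ modulo $x^q - 1$, where $f(x) := x + x^2 + \cdots + x^{q-1}$, and then to apply the standard roots-of-unity filter. Concretely, each tuple $(f_2, \ldots, f_m) \in [q-1]^{m-1}$ contributes $1$ to the coefficient of $x^{f_2 + \cdots + f_m}$ in $f(x)^{m-1}$, so $|\mathcal{C}_i|$ equals the sum of coefficients of $x^k$ over integers $k$ with $\langle k \rangle_q = i$ (recalling that the paper's convention relabels the residue $0$ as $q$).

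The filter identity
\begin{equation*}
|\mathcal{C}_i| \;=\; \frac{1}{q}\sum_{r=0}^{q-1}\omega^{-ri}\, f(\omega^r)^{m-1},
\end{equation*}
where $\omega$ is a primitive $q$-th root of unity, then does the heavy lifting. The $r = 0$ term contributes $(q-1)^{m-1}/q$. For $r \neq 0$, the identity $\sum_{s=0}^{q-1}\omega^{rs} = 0$ gives $f(\omega^r) = -1$, hence $f(\omega^r)^{m-1} = (-1)^{m-1}$. Combining the two parts, $|\mathcal{C}_i| = \frac{1}{q}\left[(q-1)^{m-1} + (-1)^{m-1}T_i\right]$, where $T_i := \sum_{r=1}^{q-1}\omega^{-ri}$ evaluates to $q-1$ when $i \equiv 0 \pmod{q}$ (the label $q$ under our convention) and to $-1$ otherwise. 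Splitting on the sign of $(-1)^{m-1}$ and on whether $i = q$ recovers exactly the four closed forms claimed in the proposition, with $h_q$ attached to the residue-$0$ class.

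No step is conceptually hard; the only point requiring care is the $\langle\cdot\rangle_q$ relabelling, namely confirming that the distinguished index $h_q$ (with the anomalous cardinality) is exactly the one for which the sum lies in $q\mathbb{Z}$, and that the parity of $m-1$ flips the direction in which the cardinality deviates from the average $(q-1)^{m-1}/q$. This matches the proposition because for even $m$ we have $(-1)^{m-1} = -1$, making $|\mathcal{C}_q|$ the smaller value, whereas for odd $m$ the sign is positive and $|\mathcal{C}_q|$ becomes the larger one.

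As an elementary alternative that avoids complex arithmetic, I could instead set $a_n := |\mathcal{C}_i|$ for any $i \neq q$ and $b_n := |\mathcal{C}_q|$ (with $n = m-1$), derive the recurrences $a_n = (q-2)\,a_{n-1} + b_{n-1}$ and $b_n = (q-1)\,a_{n-1}$ by conditioning on the last coordinate $f_m$, and close the induction from the base $(a_1, b_1) = (1, 0)$ by an algebraic verification in each parity class. The conditioning argument simultaneously yields the symmetry $|\mathcal{C}_i| = |\mathcal{C}_j|$ for all $i, j \in [q-1]$, so no separate bijective argument is needed. Either route works; the generating-function version is the more compact presentation and is the one I would write up.
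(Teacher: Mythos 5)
Your proof is correct, and your primary route is genuinely different from the paper's. The paper proves Proposition \ref{pro-2} by induction on $m$: it tracks the multiset $\mathcal{C}^{(k)}$ of achievable sums, observes that appending one more coordinate replaces $\mathcal{C}^{(k)}$ by the union of its $q-1$ shifts $\mathcal{C}^{(k)}+1,\ldots,\mathcal{C}^{(k)}+q-1$, and counts how many times each residue appears after shifting -- which is precisely your ``elementary alternative'' recurrence $a_n=(q-2)a_{n-1}+b_{n-1}$, $b_n=(q-1)a_{n-1}$ with base $(a_1,b_1)=(1,0)$, including the observation that the shift structure preserves the symmetry $|\mathcal{C}_i|=|\mathcal{C}_j|$ for $i,j\in[q-1]$. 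Your roots-of-unity filter replaces that two-case parity induction with the single identity $|\mathcal{C}_i|=\frac{1}{q}\bigl[(q-1)^{m-1}+(-1)^{m-1}T_i\bigr]$, where $T_q=q-1$ and $T_i=-1$ for $i\in[q-1]$; the computation $f(\omega^r)=-1$ for $r\neq 0$ is correct, the identification of the distinguished class with the residue-$0$ (relabelled $q$) class matches the paper's, and the sign of $(-1)^{m-1}$ reproduces exactly the parity split in the statement. What the filter buys is a closed form in one stroke, with the even/odd dichotomy emerging as the sign of $(-1)^{m-1}$ rather than as two separate inductive cases; what the paper's induction buys is a purely elementary argument over the integers that also makes visible the shift structure of $\mathcal{C}^{(k)}$, which the surrounding Section \ref{sub-Select-last} reuses when intersecting with the sets $\mathcal{A}_{m+1,i}$. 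Either write-up is complete; only take care, as you note, that the anomalous class is indexed by $q$ under the convention $\langle a\rangle_q=q$ when $q\mid a$.
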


When $m$ is even, by Proposition \ref{pro-2} we assume that $$|\mathcal{C}_{i_1}|=|\mathcal{C}_{i_2}|=\cdots=|\mathcal{C}_{i_{q-1}}|=\frac{(q-1)^{m-1}+1}{q}$$ for each $j\in[q-1]$ for some $q-1$ different the integers $i_1$, $i_2$, $\ldots$, $i_{q-1}$. Then we choose $\mathcal{A}_{m+1,i_1}$ as the first subset after our first $m$ selected subsets $\mathcal{A}_{1,q}$,  $\mathcal{A}_{2,q}$, $\ldots$,  $\mathcal{A}_{m,q}$, and have the following result.
\begin{align}
S\geq &\sum_{u=1}^{m}(q-1)^uq^{m-u}+\left|\mathcal{E}\bigcap\mathcal{A}_{m+1,i_1}\right|\label{eq-comput-paratition-2}\\
=&\sum_{u=1}^{m}(q-1)^uq^{m-u}+(q-1)^{m-1}(q-2)\nonumber\\
&+ \frac{(q-1)^{m-1}+1}{q}.\nonumber
\end{align}
By Lemma \ref{lemma-key-m+1}, we choose $\mathcal{A}_{m+1,i_2}$ as the second subset. Let us consider the case $l=2$ in Lemma  \ref{lemma-key-m+1}. If $\sum_{j\in[2:m]}f_j\in\{i_1,i_2\}=i_2$. For any vector ${\bf c}=(f_2,\ldots,f_{m})\in \mathcal{C}_{i_2}$, we have $$|\mathcal{A}_{m+1,i_1,i_2}\bigcap\mathcal{F}_{f_2,\ldots,f_m}|=q-2.$$ 
By the proof of Lemma \ref{lemma-key-m+1}, we have the vector set 
$$(\mathcal{F}_{{\bf f}'}\setminus\{(i_1-i_2,f_2,\ldots,f_m,i_1)\})\subseteq\mathcal{A}_{m+1,i_1,i_2}.$$ 
This implies that 
$$\left(\bigcup_{{\bf c}\in\mathcal{C}_{i_2}}\mathcal{F}_{{\bf c}}\setminus\{(i_1-i_2,f_2,\ldots,f_m,i_1)\}\right)\subseteq \mathcal{A}_{m+1,i_1,i_2}$$ always holds; 
for any vector $${\bf c}'=(f'_2,\ldots,f'_m)\in \mathcal{C}_{i'}$$ where  $i'\not\in\{i_1,i_2\}$, by Lemma \ref{lemma-key-m+1} we have $$|\mathcal{F}_{{\bf c}'}\bigcap\mathcal{A}_{m+1,i_1,i_2}|=q-3.$$ By the proof of Lemma \ref{lemma-key-m+1}, we have 
\begin{align*}
(\mathcal{F}_{{\bf f}''}\setminus\{(i_1-i',f'_2,\ldots,f'_m,i_1),(i_2-i',f_2,\ldots,f_m,i_2)\})\subseteq\mathcal{A}_{m+1,i_1,i_2}.
\end{align*}This implies that 
\begin{align*}
\left(\bigcup_{{\bf c}'\in\mathcal{C}_{i'}}\mathcal{F}_{{\bf c}'}\setminus\{(i_1-i',f_2,\ldots,f_m,i_1),(i_2-i',f_2,\ldots,f_m,i_2)\}\right)
\subseteq \mathcal{A}_{m+1,i_1,i_2}
\end{align*} always holds.  In addition, the 
\begin{align*}
\bigcup_{{\bf f}'\in\mathcal{C}_1}\mathcal{F}_{{\bf f}'}\bigcap\mathcal{A}_{m+1,i_1},\ \bigcup_{{\bf f}'\in\mathcal{C}_2}\mathcal{F}_{{\bf f}'}\bigcap\mathcal{A}_{m+1,i_1}, \ldots,\  \bigcup_{{\bf f}'\in\mathcal{C}_q}\mathcal{F}_{{\bf f}'}\bigcap\mathcal{A}_{m+1,i_1}
\end{align*} are exactly the partition of $\mathcal{E}\cap\mathcal{A}_{m+1,i_1}$. So for each integer $i_1$, the cardinality of $\mathcal{E}\cap\mathcal{A}_{m+1,i_1}$ is
\begin{align}
\label{eq-number-2}	
|\mathcal{C}_{i_1}|(q-2)+|\mathcal{C}_{i_2}|(q-2)+\sum_{i'\in [q]\setminus\{i_1,i_2\}}|\mathcal{C}_{i'}|(q-3).
\end{align} So after our first $m+2$ selected subsets $\mathcal{A}_{1,q}$,  $\mathcal{A}_{2,q}$, $\ldots$,  $\mathcal{A}_{m,q}$, $\mathcal{A}_{m+1,i_1}$, $\mathcal{A}_{m+1,i_2}$, we have the following result.
\begin{align}
S\geq& \sum_{u=1}^{m}(q-1)^uq^{m-u}+(q-1)^{m-1}(q-2)+ \frac{(q-1)^{m-1}+1}{q}+
\left|\mathcal{E}\bigcap\mathcal{A}_{m+1,h_1}\bigcap\mathcal{A}_{m+1,h_2}\right|\nonumber\\
=& \sum_{u=1}^{m}(q-1)^uq^{m-u}+(q-1)^{m-1}(q-2)+ \frac{(q-1)^{m-1}+1}{q}+2\times\frac{(q-1)^{m-1}+1}{q}(q-2)\nonumber\\
&+\left((q-1)^{m-1} -2\times\frac{(q-1)^{m-1}+1}{q}\right)\times (q-3)\nonumber\\
=&\sum_{u=1}^{m}(q-1)^uq^{m-u}+(q-1)^{m-1}(q-2)+(q-1)^{m-1}(q-3)+ 3\times \frac{(q-1)^{m-1}+1}{q}.
\label{eq-comput-paratition-3}
\end{align}
Similar to the above introduction, we have
\begin{align}
S\geq& \sum_{u=1}^{m}(q-1)^uq^{m-u}+|\mathcal{E}\bigcap\mathcal{A}_{m+1,h_1}|+
|\mathcal{E}\bigcap\mathcal{A}_{m+1,i_1}\bigcap\mathcal{A}_{m+1,i_2}|+\cdots+	|\mathcal{E}\bigcap(\bigcap_{l=1}^{q-1}\mathcal{A}_{m+1,i_l})|\nonumber\\
=&\sum_{u=1}^{m}(q-1)^uq^{m-u}+\sum_{i=2}^{q-1}(q-1)^{m-1}(q-i)+(1+2+\cdots+q-1)\times\frac{(q-1)^{m-1}+1}{q} \nonumber\\
=&\sum_{u=1}^{m}(q-1)^uq^{m-u}+\frac{(q-1)^{m+1}}{2}+\frac{q-1}{2}.	\label{eq-comput-paratition-4}
\end{align}Besides, we have 
\begin{align*}
&\sum_{u=1}^{m}(q-1)^uq^{m-u}\\
=&q^{m-1}\left({1\choose 0}q+(-1)^{1}\right)+ q^{m-2}\left({2\choose 0}q^{2}+{2\choose 1}(-1)^{1}q^{1}+(-1)^{2}\right)+q^{m-3}\left({3\choose 0}q^{3}+{3\choose 1}(-1)^{1}q^{2}\right.\\
&\left.+{3\choose 2}(-1)^{2}q^{1}+(-1)^{3}\right)+\cdots+q^{0}\left({m\choose 0}q^{m}+{m\choose 1}(-1)^{1}q^{m-1}+{m\choose 2}(-1)^{2}q^{m-2}+\cdots+(-1)^{m}\right)\\
=&mq^{m}-q^{m-1}\left(1+{2\choose 1}+{3\choose 1}+\cdots+{m\choose 1}\right)+q^{m-2}\left(1+{3\choose 2}+{4\choose 2}+\cdots+{m\choose 2}\right)+\cdots\\
&+(-1)^{m-1}\left({m-1\choose m-1}+{m\choose m-1}\right)q+(-1)^{m}\\
=&mq^{m}-{m+1\choose 2}q^{m-1}+{m+1\choose 3}q^{m-2}+\cdots+(-1)^{m-1}{m+1\choose m}q+(-1)^{m}{m+1\choose m+1}\\
=&-\left(q^{m+1}-{m+1\choose 1}q^{m}+{m+1\choose 2}q^{m-1} -{m+1\choose 3}q^{m-2}+\cdots+(-1)^{m}{m+1\choose m+1}\right)+q^{m+1}-q^{m}\\
=&-(q-1)^{m+1}-q^{m}+q^{m+1}=(q-1)q^{m}-(q-1)^{m+1}.
\end{align*} 
From \eqref{eq-max}, we obtain the lower bound of the partition PDA 
\begin{align}
\label{eq-partition-B-1}
&\max\Big\{\sum_{h\in[K]}|\bigcap_{j\in[h]}\mathcal{A}_{i_j}|\ \Big|\ (i_1,i_2,\ldots,i_{K})\in \mathcal{I}\Big\}\nonumber\\
\geq &(q-1)q^{m}-(q-1)^{m+1}+q(q-1)^m -\frac{(q-1)^m(q-2)}{2}+\frac{q-1}{2}	\nonumber\\
=&(q-1)q^{m}-\frac{(q-1)^{m+1}}{2}+\frac{q-1}{2}.
\end{align}

When $m$ is odd, by Proposition \ref{pro-2} and using the same discussion above, we have the lower bound of the partition PDA as follows.
\begin{align*}
&\max\{\sum_{h\in[K]}|\bigcap_{j\in[h]}\mathcal{A}_{i_j}||(i_1,i_2,\ldots,i_{K})\in \mathcal{I}\}\\
\geq& (q-1)q^{m}-(q-1)^{m+1}+q(q-1)^m -\frac{(q-1)^m(q-2)}{2}+\frac{q-1}{2}\\
=&(q-1)q^{m}-\frac{(q-1)^{m+1}}{2}+\frac{q-1}{q}	
\end{align*} 

By Construction \ref{constr-3} we have an $(m+1)-((m+1)q,q^{m},q^{m-1},q^{m}(q-1))$ partition PDA. The ratio between our bounds in \eqref{eq-partition-B-1}, \eqref{eq-partition-B-2} and the $S$ of partition PDA is 
\begin{align}
\label{eq-partition-B-2}
&\max\Big\{\sum_{h\in[K]}|\bigcap_{j\in[h]}\mathcal{A}_{i_j}|\Big|(i_1,i_2,\ldots,i_{K})\in \mathcal{I}\Big\}\nonumber\\
\geq &(q-1)q^{m}-(q-1)^{m+1}+q(q-1)^m -\frac{(q-1)^m(q-2)}{2}+\frac{q-1}{2}\nonumber\\
=&(q-1)q^{m}-\frac{(q-1)^{m+1}}{2}+\frac{q-1}{q}.
\end{align} 
We can see that when $q=2$, we have
\begin{align*}
&1-\frac{1}{2}\times \left(\frac{q-1}{q}\right)^m+\frac{1}{2q^{m}}\\
=& 1-\frac{1}{2}\times \left(\frac{q-1}{q}\right)^m+\frac{1}{q^{m+1}}\\
=&1-\frac{1}{2^{m+1}}+\frac{1}{2^{m+1}}=1.	
\end{align*}

This implies that when $q=2$ the partition PDA in \cite{YCTC} is optimal. Moreover, when $m$ is large,
\begin{align*}
&1-\frac{1}{2}\times \left(\frac{q-1}{q}\right)^m+\frac{1}{2q^{m}}\\
=&1-\frac{1}{2}\times \left(\frac{q-1}{q}\right)^m+\frac{1}{q^{m+1}}\\
\approx&1.	
\end{align*}This implies that when $q>2$,  the partition PDA in \cite{YCTC} is asymptotically optimal.

\section{Conclusion}
\label{conclusion}
In this paper, we investigated the fundamental trade-off between the subpacketization level and transmission load, and proposed a lower bound on optimal transmission load for any centralized coded caching system with any subpacketization level $F$ under an identically uncoded placement strategy and one-shot delivery strategy. Our work is the first attempt in the literature to theoretically establish the trade-off between subpacketization and transmission load. By carefully exploiting the combinatorial structure and computing the union size of sorted sets, we established a new optimality result, i.e., the partition scheme can achieve the optimal rate-subpacketization trade-off. However, we acknowledge that the lower bound may not be universally applicable for all parameter combinations, suggesting the need for further research to improve it. Additionally, extending this bound to more general scenarios, such as decentralized systems and multi-shot delivery cases, remains a compelling direction for future work.

\section{Acknowledgement}
The authors would like to thank Prof. Xiaohu Tang for his valuable comments and constructive suggestions in improving the presentation and quality of this paper.



\appendices

\section{Optimality of the bipartite graph PDA}
\label{appendix-MN}
\begin{theorem}\rm
	\label{th-first-optimality}
	For any positive integers $m$, $a$, $b$ and $h$ satisfying $a+b<m$, the $({m\choose a},{m\choose b},{m\choose b}-{m-a\choose b},{m\choose a+b})$ bipartite graph PDA and the $(h{m\choose a},{m\choose b},{m\choose b}-{m-a\choose b},h{m\choose a+b})$ grouping bipartite graph PDA  can achieve their optimal rate-subpacketization trade-offs.
\end{theorem}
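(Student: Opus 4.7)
The plan is to apply Theorem~\ref{th-PDA-set-bound} and match its lower bound by exhibiting a permutation of the user set whose ``staircase'' covers every integer of the PDA. Since both PDAs achieve $R=S/F$ with $F={m\choose b}$ and $S={m\choose a+b}$ (respectively $S=h{m\choose a+b}$), and the proof of Theorem~\ref{th-PDA-set-bound} already establishes that for any permutation one has $S\geq\sum_{h=1}^K|\bigcap_{j\leq h}\mathcal{A}_{i_j}|$, it suffices to produce a single permutation achieving equality.

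For the bipartite graph PDA, identify the $K={m\choose a}$ columns with the $a$-subsets of $[m]$ and the $F={m\choose b}$ rows with the $b$-subsets; then $\mathcal{A}_{\mathcal{K}}=\{\mathcal{B}:\mathcal{B}\cap\mathcal{K}=\emptyset\}$ and the non-star entry at $(\mathcal{B},\mathcal{K})$ equals the $(a+b)$-subset $\mathcal{B}\cup\mathcal{K}$. I would take the permutation given by the \emph{colex order} on $a$-subsets of $[m]$: $\mathcal{K}<_{colex}\mathcal{K}'$ iff $\max(\mathcal{K}\triangle\mathcal{K}')\in\mathcal{K}'$. Let $U_h$ denote the union of the first $h$ chosen $a$-subsets, and for an $(a+b)$-subset $\mathcal{C}=\{c_1<\cdots<c_{a+b}\}$ let $\mathcal{K}^*=\{c_1,\ldots,c_a\}$ with $h^*$ its colex position. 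The core combinatorial claim to prove is $U_{h^*}\cap\mathcal{C}=\mathcal{K}^*$: the inclusion $\supseteq$ is obvious, while $\subseteq$ follows from the standard colex fact that $\mathcal{K}'\leq_{colex}\mathcal{K}^*$ forces $\max\mathcal{K}'\leq c_a$, so $U_{h^*}\subseteq[c_a]$ and hence $U_{h^*}\cap\mathcal{C}\subseteq[c_a]\cap\mathcal{C}=\mathcal{K}^*$. Thus at step $h^*$ the pair $(\mathcal{C}\setminus\mathcal{K}^*,\mathcal{K}^*)$ is a valid staircase cell whose integer label is $\mathcal{C}$, so every one of the ${m\choose a+b}$ integers is covered and the sum equals $S$.

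For the grouping bipartite graph PDA, the users are indexed by pairs $(g,\mathcal{K})$ with $g\in[h]$ and $\mathcal{K}$ an $a$-subset, the subpackets are still the $b$-subsets, and the non-star entries of different groups use disjoint integer labels, while the star pattern gives $\mathcal{A}_{(g,\mathcal{K})}=\mathcal{A}_\mathcal{K}$ independent of $g$. I propose the ``$h$-block'' permutation: traverse the $a$-subsets in colex order and, for each $\mathcal{K}$, insert its $h$ group copies $(1,\mathcal{K}),\ldots,(h,\mathcal{K})$ consecutively. Because $\mathcal{A}_{(g,\mathcal{K})}$ does not depend on $g$, the running intersection is unchanged within a block, so each of the $h$ consecutive steps contributes $|\bigcap_{\mathcal{K}'\leq_{colex}\mathcal{K}}\mathcal{A}_{\mathcal{K}'}|$ new staircase cells, with the distinctness across the block guaranteed by the disjoint group-tags on the corresponding integers. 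Summing over all blocks and using the single-group identity yields $h\cdot{m\choose a+b}=S$.

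The step I anticipate as the main obstacle is rigorously verifying the coverage lemma $U_{h^*}\cap\mathcal{C}=\mathcal{K}^*$ uniformly over all $(a+b)$-subsets $\mathcal{C}$, and then checking that the colex argument accounts for each $\mathcal{C}$ exactly once rather than at multiple $h$'s; the no-double-counting is actually forced by the distinctness assertion inside the proof of Theorem~\ref{th-PDA-set-bound}, but extending the bookkeeping to the grouping construction requires one additional careful check that the $h$-block ordering never causes two occurrences of the same integer to collide across groups.
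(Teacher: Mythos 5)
Your proposal is correct and follows essentially the same route as the paper's Appendix A: you order the $a$-subsets by colex (the paper's ordering, which processes all $a$-subsets of $[a+1]$, then those with maximum $a+2$, etc., is exactly the colex block structure, and the paper notes the within-block order is immaterial), and you handle the grouping PDA by repeating each $\mathcal{A}_{\mathcal{K}}$ in $h$ consecutive copies, just as the paper does. The only cosmetic difference is that you certify the count $\binom{m}{a+b}$ via a direct injection sending each $(a+b)$-subset $\mathcal{C}$ to the staircase cell indexed by its $a$ smallest elements, whereas the paper evaluates the telescoping sum $\sum_i\bigl(\binom{a+i}{a}-\binom{a+i-1}{a}\bigr)\binom{m-a-i}{b}$ and then invokes the same "first $a$ smallest elements" correspondence to identify it with $\binom{m}{a+b}$.
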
 
\label{proof of the-4} 
\begin{construction}(\cite{MN} Bipartite graph PDA)\rm
	\label{constr-2} For any positive integers $m$, $b$ and $r$ satisfying $a+b<m$, define $\mathcal{F}={[m]\choose b}$ and $\mathcal{K}={[m]\choose a}$. Then we can obtain an $({m\choose a},{m\choose b},{m\choose b}-{m-a\choose b},{m\choose a+b})$ PDA $\mathbf{P}=(p_{\mathcal{B},\mathcal{C}})_{\mathcal{B}\in \mathcal{F}, C\in \mathcal{K}}$.  where
	\begin{eqnarray}
		\label{eq-rule}
		p_{\mathcal{B},\mathcal{C}}=\left\{
		\begin{array}{cc}
			\mathcal{B}\bigcup \mathcal{C}& \hbox{if} \ \mathcal{B}\bigcap \mathcal{C}=\emptyset,\\
			*&\hbox{Otherwise}.
		\end{array}
		\right.
	\end{eqnarray}
\end{construction}
\begin{construction}(\cite{YTCC} Group bipartite graph PDA)\rm
	\label{constr-GCMN}
	Let $\mathbf{P}'$ be a  $({m\choose a},{m\choose b},{m\choose b}-{m-a\choose b},{m\choose a+b})$ PDA generated by Construction \ref{constr-2}. For any positive integer $h$, it is easy to check that
	\begin{align}
		\label{eq-group-BGPDA}
		&\left(\mathbf{P}',\mathbf{P}'+{F\choose Z+1},\ldots,\mathbf{P}'+h{F\choose Z+1}\right)\\[-0.1cm]
		&\ \ \ \underbrace{\ \ \ \ \ \ \ \ \ \ \ \ \ \ \ \ \ \ \ \ \ \ \ \ \ \ \ \ \ \ \ \ \ \ \ \ \ \ \ \ \ \ \ \ \ \ \ }_{h}\nonumber
	\end{align}
	is an $(h{m\choose a},{m\choose b},{m\choose b}-{m-a\choose b},h{m\choose a+b})$ PDA.
\end{construction}
We use the following example to further illustrate the PDA in Construction \ref{constr-2}.
\begin{example}\label{ex-scbg}\rm When $m=5$, $a=2$ and $b=1$, we have
	\begin{eqnarray*}
		\mathcal{F}&=&\{\{1\},\{2\},\{3\},\{4\},\{5\}\}\\
		\mathcal{K}&=&\{\{1,2\}, \{1,3\},  \{1,4\},\{1,5\},\{2,3\},\{2,4\},\{2,5\},\{3,4\},\{3,5\},\{4,5\}\}.
	\end{eqnarray*}To simplify notations, we represent each subset as a string for short. For instance, the entry $\{1,2,3\}$ in the third row and the first column in \eqref{eq-exam-1} is written as $123$. We will use this subset representation throughout the paper. By the first rule in \eqref{eq-rule}, the following PDA can be obtained.
	\begin{equation}
		\label{eq-exam-1}\mathbf{P}=\bordermatrix{%
			&12&13&14&15&23&24&25&34&35&45\cr
			&*	&	*	&	*	&	*	&	123	&	124	&	125	&	134	&	135	&	145\cr
			&*	&	123	&	124	&	125	&	*	&	*	&	*	&	234	&	235	&	245\cr
			&123	&	*	&	134	&	135	&	*	&	234	&	235	&	*	&	*	&	345\cr
			&124	&	134	&	*	&	145	&	234	&	*	&	245	&	*	&	345	&	*\cr
			&125	&	135	&	145	&	*	&	235	&	245	&	*	&	345	&	*	&	*}
	\end{equation}
\end{example}

We will first show that the bipartite graph PDA achieves our new bound. Then by Theorem $1$, the optimality of the group bipartite graph PDA can be directly obtained based on the optimality of the bipartite graph PDA.

Now let us consider the bipartite graph PDA. In Construction $1$, for each $B\in \mathcal{K}$, we have
\begin{align}
	\label{eq-scg-A}
	\mathcal{A}_{\mathcal{B}}=\left\{\mathcal{C}\ \Big|\ \mathcal{B}\bigcap \mathcal{C}=\emptyset, \mathcal{C}\in \mathcal{F}\right\}
\end{align}The set $\mathcal{A}_{\mathcal{B}}$ consists of all the $b$-subsets $\mathcal{C}\in \mathcal{F}$ such that each of them contains no element from $\mathcal{B}$. So $|\mathcal{A}_{\mathcal{B}}|={m-a\choose b}$. In addition, for any $\lambda$ different subsets $\mathcal{B}_1$, $\mathcal{B}_2$, $\ldots$, $\mathcal{B}_{\lambda}\in \mathcal{K}$, by the star placement strategy and from \eqref{eq-scg-A}, the intersection of $\mathcal{A}_{\mathcal{B}_1}$,  $\mathcal{A}_{\mathcal{B}_2}$, $\ldots$,  $\mathcal{A}_{\mathcal{B}_{\lambda}}$ has exactly
\begin{align}
	\label{eq-key-scbg}
	{|[m]\backslash(\cup_{i\in [\lambda]}\mathcal{B}_i)|\choose b}
\end{align}elements which are only related to the size of the union of these $\lambda$ sets. {This is a key point for us to sort the sets $\mathcal{A}_{\mathcal{B}}$ where $\mathcal{B}\in \mathcal{K}$ and take intersections successively.} For instance, let us consider the star placement in \eqref{eq-exam-1} of Example \ref{ex-scbg}. We have
\begin{align}\label{eq-scbg-ex-1}
	\mathcal{A}_{12}=\{3,4,5\},\  \mathcal{A}_{13}=\{2,4,5\},\  \mathcal{A}_{14}=\{2,3,5\},\  \mathcal{A}_{15}=\{2,3,4\},\ \mathcal{A}_{23}=\{1,4,5\},\nonumber\\
	\mathcal{A}_{24}=\{1,3,5\},\  \mathcal{A}_{25}=\{1,3,4\},\  \mathcal{A}_{34}=\{1,2,5\},\  \mathcal{A}_{35}=\{1,2,4\},\  \mathcal{A}_{45}=\{1,2,3\}.
\end{align}In order to estimate the maximum value $S$ in \eqref{eq-exam-1}, from \eqref{eq-key-scbg} we consider the problem of ordering the sets in \eqref{eq-scbg-ex-1}. First, we choose $\mathcal{A}_{12}$ as the first set and have
\begin{align}
	\label{eq-ex-1-1}
	{ S\geq S_1=|\mathcal{A}_{12}|}={2\choose 2}{5-2\choose 1}=3.
\end{align}Here the first item ${2\choose 2}$ means that we choose a $2$-subset from $[2]$ and the second item ${5-2\choose 1}$ means that there are ${m-a\choose b}$ possible subsets left, satisfying that the intersection of each of them with $[2]$ is empty.  Then from \eqref{eq-key-scbg}, among all these sets in \eqref{eq-scbg-ex-1}, any one of set $\mathcal{A}_{23}$, $\mathcal{A}_{13}$, $\mathcal{A}_{24}$, $
\mathcal{A}_{14}$, $\mathcal{A}_{15}$, $\mathcal{A}_{25}$ gives the smallest number of elements in the intersection with set $\mathcal{A}_{12}$. To better introduce our idea, we will choose the next subset in the ascending order of their subscripts. So we will choose the subset $\mathcal{A}_{13}$ and have
\begin{align}
	\label{eq-ex-1-2}
	S\geq S_1+|\left(\mathcal{A}_{12}\cap\mathcal{A}_{13}\right)|={2\choose 2}{5-2\choose 1}+{5-3\choose 1}=3+2=5.
\end{align} Here the second item ${5-3\choose 1}$ means that there are exactly ${5-3\choose 1}$ possible subsets left satisfying that each of them does not contain integer $2$ and the intersection of each of them with $[3]$ is empty. According to the rules we choose for the sets, we should choose the subset $\mathcal{A}_{23}$ next. We have
\begin{align}
	\label{eq-ex-1-3}
	S\geq& S_1+S_2={|\mathcal{A}_{12}|}+\left|\left(\mathcal{A}_{12}\bigcap\mathcal{A}_{13}\right)\right|+\left|\left(\mathcal{A}_{12}\bigcap\left(\mathcal{A}_{13}\bigcap\mathcal{A}_{23}\right)\right)\right|\nonumber\\
	=&{2\choose 2}{5-2\choose 1}+\left({3\choose 1}-1\right){5-3\choose 1}=3+2\times 2=7.
\end{align}Here in the item $\left({3\choose 1}-1\right){5-3\choose 1}$, the first item ${3\choose 1}-1$  means that there are exactly ${3\choose 1}-1$ possible $2$-subset of $[3]$ expect for the $2$-subset $[2]$ which was chosen in the first step. The second item ${5-3\choose 1}$ means that there are exactly ${5-3\choose 1}$ possible subsets left satisfying that the intersection of each of them with $[3]$ is empty. It is interesting that for the subset of $[3]$, in the second step if we first choose the subset  $\mathcal{A}_{23}$ and then choose the subset  $\mathcal{A}_{13}$, we have
$$\left|\left(\mathcal{A}_{12}\bigcap\mathcal{A}_{13}\right)\right|+\left|\left(\mathcal{A}_{12}\bigcap\left(\mathcal{A}_{13}\bigcap\mathcal{A}_{23}\right)\right)\right|=
\left|\left(\mathcal{A}_{12}\bigcap\mathcal{A}_{23}\right)\right|+\left|\left(\mathcal{A}_{12}\bigcap\left(\mathcal{A}_{23}\bigcap\mathcal{A}_{13}\right)\right)\right|.$$
The reason is that $\{1,2\}\cup\{2,3\}=\{1,2\}\cup\{1,3\}=[3]$. So if there are several left subsets satisfying the following condition: the union of each subset with the previously selected subsets is the same, then we do not necessarily have to select them in strictly ascending order of their indices. Instead, we can choose any one of them. For simplicity, in the second step, we first choose the subset  $\mathcal{A}_{23}$ and then choose the subset  $\mathcal{A}_{13}$. By adjusting the subsets $\mathcal{A}_{23}$ and $\mathcal{A}_{13}$ in \eqref{eq-ex-1-2} and \eqref{eq-ex-1-3}, we have
\begin{align}
	\label{eq-ex-1-4}
	S\geq& {S_1}+\left|\left(\mathcal{A}_{12}\bigcap\mathcal{A}_{[3]\setminus\{1\}}\right)\right|={2\choose 2}{5-2\choose 1}+{5-3\choose 1}=3+2=5\nonumber\\
	S\geq&{\color{black}S_1}+S_2=|\mathcal{A}_{12}|+\left(\left|\left(\mathcal{A}_{12}\bigcap\mathcal{A}_{[3]\setminus\{1\}}\right)\right|+
		\left|\bigcap_{i\in[3]}\mathcal{A}_{[3]\setminus\{i\}}\right|\right)\nonumber\\
	=&{2\choose 2}{5-2\choose 1}+\left({3\choose 1}-1\right){5-3\choose 1}=3+2\times 2=7.
\end{align}
Similar to the above selecting steps, we should consider the left subsets $\mathcal{A}_{34}$, $\mathcal{A}_{24}$,
$\mathcal{A}_{14}$ and select them in sequence. Then we have
\begin{align}
	\label{eq-ex-1-5}
	S\geq& S_1+S_2+S_3\nonumber\\
	=&|\mathcal{A}_{12}|+\left(\left|\left(\mathcal{A}_{12}\bigcap\mathcal{A}_{[3]\setminus\{1\}}\right)\right|+
		\left|\bigcap_{i\in[3]}\mathcal{A}_{[3]\setminus\{i\}}\right|\right)+\left|\left(\bigcap_{i\in[3]}\mathcal{A}_{[3]\setminus\{i\}}\right)\bigcap\mathcal{A}_{34}\right|+\nonumber\\
	&\left|\left(\bigcap_{i\in[3]}\mathcal{A}_{[3]\setminus\{i\}}\right)\bigcap\mathcal{A}_{34}\right|
		+
		\left|\left(\bigcap_{i\in[3]}\mathcal{A}_{[3]\setminus\{i\}}\right)\bigcap\mathcal{A}_{34}\bigcap\mathcal{A}_{24}\right|  +\left|\left(\bigcap_{i\in[3]}\mathcal{A}_{[3]\setminus\{i\}}\right)\bigcap_{\mathcal{B}\in {[3]\choose 2}}\mathcal{A}_{[4]\setminus \mathcal{B}}\right|
	\nonumber\\
	=&{2\choose 2}{5-2\choose 1}+\left({3\choose 1}-1\right){5-3\choose 1}+\left({4\choose 2}-{3\choose 2}\right){5-4\choose 1}\nonumber\\
	=&3+2\times 2+(6-3)=10.
\end{align}In term $\left({4\choose 2}-{3\choose 2}\right){5-4\choose 1}$, the number ${4\choose 2}-{3\choose 2}$ means that among the  ${4\choose 2}$ $2$-subset of $[4]$, there are ${3\choose 2}$ $2$-subsets were selected in the previous steps, i.e., there are  ${4\choose 2}-{3\choose 2}$ $2$-subsets left. Similarly, we use the same selecting method to consider the left subsets $\mathcal{A}_{15}$, $\mathcal{A}_{25}$, $\mathcal{A}_{35}$, $\mathcal{A}_{45}$. We can check that the intersection of each of $\mathcal{A}_{15}$, $\mathcal{A}_{25}$, $\mathcal{A}_{35}$, $\mathcal{A}_{45}$ with $\cap_{\mathcal{B}\in {[4]\choose 2}}\mathcal{A}_{[4]\setminus \mathcal{B}}$ is empty. So it is not necessary to consider the left subsets. From the above introduction, we can see that we sort the sets in \eqref{eq-scbg-ex-1} in the following order
\begin{align}\label{eq-scbg-order}
	\mathcal{A}_{12},\ \mathcal{A}_{23},\ \mathcal{A}_{13},\ \mathcal{A}_{34},\ \mathcal{A}_{24},\
	\mathcal{A}_{14},\ \mathcal{A}_{45},\ \mathcal{A}_{15},\ \mathcal{A}_{25},\ \mathcal{A}_{35}.
\end{align}By Theorem $1$, we have the following lower bound for the star placement in \eqref{eq-exam-1}.
$$\max\Big\{\sum_{h\in[K]}|\bigcap_{j\in[h]}\mathcal{A}_{i_j}|\ \Big|\ (i_1,i_2,\ldots,i_{K})\in \mathcal{I}\Big\}\geq S_1+S_2+S_3=10.$$
Clearly, the PDA in \eqref{eq-exam-1} achieves our lower bound. This implies that the PDA in \eqref{eq-exam-1} is optimal under the given star placement, and our lower bound is tight for this case.

According to the above example, for any parameters $m$, $a$, and $b$, we can also sort the sets as in \eqref{eq-ex-1-5} and take intersections successively.  That is,
\begin{subequations}
	\begin{align}
		&\max\Big\{\sum_{h\in[K]}|\bigcap_{j\in[h]}\mathcal{A}_{i_j}|\ \Big|\ (i_1,i_2,\ldots,i_{K})\in \mathcal{I}\Big\}\label{eq-SBG-0}\\
		\geq&|\mathcal{A}_{[a]}|+|\mathcal{A}_{[a]}\bigcap
		\mathcal{A}_{[a+1]\setminus\{1\}}|+
		|\mathcal{A}_{[a]}\bigcap(
		\bigcap_{J_1\in{[2]\choose 1}}\mathcal{A}_{[a+1]\setminus J_1})|+\cdots+
		|\bigcap_{J_1\in{[a+1]\choose 1}}\mathcal{A}_{[a+1]\setminus J_1}|+
		\label{eq-SBG-1}\\
		&|(
		\bigcap_{J_1\in{[a+1]\choose 1}}\mathcal{A}_{[a+1]\setminus J_1})\bigcap
		\mathcal{A}_{[a+2]\setminus\{1,2\}}|+|(
		\bigcap_{J_1\in{[a+1]\choose 1}}\mathcal{A}_{[a+1]\setminus J_1})\bigcap
		\mathcal{A}_{[a+2]\setminus\{1,2\}}\bigcap
		\mathcal{A}_{[a+2]\setminus\{1,3\}}|+\nonumber\\
		&
		|(
		\bigcap_{J_1\in{[a+1]\choose 1}}\mathcal{A}_{[a+1]\setminus J_1})
		\bigcap
		(\bigcap_{J_2\in {[3]\choose 2}}
		\mathcal{A}_{[a+2]\setminus J_2})|+
		\cdots+
		|(
		\bigcap_{J_1\in{[a+1]\choose 1}}\mathcal{A}_{[a+1]\setminus J_1})
		\bigcap
		(\bigcap_{J_2\in {[a+1]\choose 2}}
		\mathcal{A}_{[a+2]\setminus J_2})|+\label{eq-SBG-2}\\
		&
		|(
		\bigcap_{J_2\in{[a+2]\choose 2}}\mathcal{A}_{[a+2]\setminus J_2})
		\bigcap\mathcal{A}_{[a+3]\setminus\{1,2,3\}}|+\cdots+|
		\bigcap_{J_2\in{[a+3]\choose 3}}\mathcal{A}_{[a+3]\setminus J_3}|+
		\cdots+|\bigcap_{J_{3}\in{[m+3]\choose 3}}
		\mathcal{A}_{[m]\setminus J_{3}}|\label{eq-SBG-3}+\\
		&\cdots+
		\cdots+|\bigcap_{J_{m-a}\in{[m]\choose m-a}}
		\mathcal{A}_{[m]\setminus J_{m-a}}|\nonumber\\
		&={a\choose a}{m-a\choose b}+\left({a+1\choose a}-{a\choose a}\right){m-a-1\choose b}+\left({a+2\choose a}-{a+1\choose a}\right){m-a-2\choose b}\nonumber\\
		&+\cdots+\left({a+i\choose a}-{a+i-1\choose a}
		\right){m-a-i\choose b}+\cdots+\left({m-a-b\choose a}-{m-a-b-1\choose a}\right){b\choose b}\nonumber\\
		&={a\choose a}{m-a\choose b}+{a\choose a-1}{m-a-1\choose b}+{a+1\choose a-1}{m-a-2\choose b}+\cdots
		\label{eq-sec-1}\\
		&+{a+i\choose a-1}{m-a-i-1\choose b}+\cdots+{m-b-1\choose a-1}{b\choose b}\nonumber
	\end{align}
\end{subequations}Here we can see that $|\mathcal{A}_{[a]}|$ in \eqref{eq-SBG-1} is the number of non-star entries, each of which contains the integer set $[a]$ and is labeled by column $[a]$.

Next, we show that
\begin{IEEEeqnarray}{rCl}
	&&{a\choose a}{m-a\choose b}+{a\choose a-1}{m-a-1\choose b}+{a+1\choose a-1}{m-a-2\choose b}+\cdots \nonumber
	\\
	&& +{a+i\choose a-1}{m-a-i-1\choose b}+\cdots +{m-b-1\choose a-1}{b\choose b} = {m\choose a+b}. \label{eq21}
\end{IEEEeqnarray}
From \eqref{eq-rule}, we have $|\mathcal{A}_{[a]}|={m-a\choose b}$. We can regard it as the number of $(a+b)$-subsets each of which contains $[a]$; $|\mathcal{A}_{[a]}\bigcap
\mathcal{A}_{[a+1]\setminus\{1\}}|+
|\mathcal{A}_{[a]}\bigcap(
\bigcap_{J_1\in{[2]\choose 1}}\mathcal{A}_{[a+1]\setminus J_1})|+\cdots+
|\bigcap_{J_1\in{[a+1]\choose 1}}\mathcal{A}_{[a+1]\setminus J_1}|$ in \eqref{eq-SBG-1} is the number of non-star entries each of which contains the integer set $[a+1]\setminus\{j\}$ for each and is also column labeled by the set $[a+1]\setminus\{j\}$ for some integer $i\in[a]$;
\begin{align*}
	&\left|\left(
	\bigcap_{J_1\in{[a+1]\choose 1}}\mathcal{A}_{[a+1]\setminus J_1}\right)\bigcap
	\mathcal{A}_{[a+2]\setminus\{1,2\}}\right|+\left|\left(
	\bigcap_{J_1\in{[a+1]\choose 1}}\mathcal{A}_{[a+1]\setminus J_1}\right)\bigcap
	\mathcal{A}_{[a+2]\setminus\{1,2\}}\bigcap
	\mathcal{A}_{[a+2]\setminus\{1,3\}}\right|\\
	&+
	\left|\left(
	\bigcap_{J_1\in{[a+1]\choose 1}}\mathcal{A}_{[a+1]\setminus J_1}\right)
	\bigcap
	\left(\bigcap_{J_2\in {[3]\choose 2}}
	\mathcal{A}_{[a+2]\setminus J_2}\right)\right|+\cdots 
	+
	\left|\left(
	\bigcap_{J_1\in{[a+1]\choose 1}}\mathcal{A}_{[a+1]\setminus J_1}\right)
	\bigcap
	\left(\bigcap_{J_2\in {[a+1]\choose 2}}
	\mathcal{A}_{[a+2]\setminus J_2}\right)\right|
\end{align*} in \eqref{eq-SBG-2} is the number of non-star entries, each of which contains the integer set $[a+1]\setminus\{j\}$ and is also column labeled by the integer set $[a+2]\setminus\{j_1,j_2\}$ for some different integers $j_1,j_2\in [a+1]$ and so on.

From the above introduction, the first item of ${a+i\choose a-1}{m-a-i-1\choose b-1}$ in \eqref{eq-sec-1} can be regarded as the number of $(a+b)$-subset containing integer $a+i$ and $a-1$ integer in $[a+i-1]$. In addition, for each integer $0\leq i\leq m-a-b-1$, each such $(a+b)$-subset $D$ uniquely contains an $a$-subset consisting of the first $a$ smallest integers in $D$. So, by the above introduction,  \eqref{eq-sec-1} is exactly the number of all the $(a+b)$-subsets of $[m]$.

From \eqref{eq-SBG-0} and \eqref{eq21} we have the following result.
\begin{align}
	\max\Big\{\sum_{h\in[K]}|\bigcap_{j\in[h]}\mathcal{A}_{i_j}|\ \Big|\ (i_1,i_2,\ldots,i_{K})\in \mathcal{I}\Big\}
	\geq {m\choose a+b}. \label{eq22}
\end{align}
From the  Construction \ref{constr-2} and \eqref{eq22}, we obtain that the bipartite graph PDA achieves the optimal rate-subpacketization trade-off.

For the sort of all the subsets for the bipartite graph PDA, without loss of generality, we assume the sort is $\mathcal{A}_1$, $\mathcal{A}_2$, $\ldots$, $\mathcal{A}_{{m\choose a}}$. From  \eqref{eq-group-BGPDA}, we can obtain a sort subset for the group bipartite graph PDA  as
\begin{align*}{
		\underbrace{\mathcal{A}'_1=\mathcal{A}_1, \ldots, \mathcal{A}'_h=\mathcal{A}_1}_h,\underbrace{\mathcal{A}'_{h+1}=\mathcal{A}_2, \ldots, \mathcal{A}'_{2h}=\mathcal{A}_2}_h,\ldots,\underbrace{\mathcal{A}'_{({m\choose a}-1)h+1}=\mathcal{A}_{{m\choose a}}, \ldots, \mathcal{A}'_{{m\choose a}h}=\mathcal{A}_{{m\choose a}}}_h.}
\end{align*}Then, by Theorem $1$ we have
\begin{align}
	\max\Big\{\sum_{h\in[K]}|\bigcap_{j\in[h]}\mathcal{A}'_{i_j}|\ \Big|\ (i_1,i_2,\ldots,i_{K})\in \mathcal{I}\Big\}
	\geq h{m\choose a+b}. \label{eq-group}
\end{align}
From the  Construction \ref{constr-GCMN} and \eqref{eq-group}, we also obtain that the group bipartite graph PDA achieves the optimal rate-subpacketization trade-off.

\section{Proof of Proposition~\ref{pro-m-order-set}}
\label{proof-pro-m-order-set} 
For any positive integer $z\leq \min\{q,m\}$, let us consider the ratio function 
\begin{align*}
	\phi(z):=\frac{(q-z)q^{\,z-1}}{(q-1)^z}. 
\end{align*}In the following, we will show that $\phi(z)\le 1$ holds for each $z\in[2:q]$. 
Let us compare consecutive values $\phi(z+1)$ and $\phi(z)$ for integers $z$ with $2\leq z< q-1$. That is,  
\begin{align*}
	\frac{\phi(z+1)}{\phi(z)}
	=&\frac{(q-(z+1))q^{z}/(q-1)^{z+1}}{(q-z)q^{z-1}/(q-1)^z}\\
	=&\frac{q}{q-1}\cdot\frac{q-z-1}{q-z}\\
	=&\left(1+\frac{1}{q-1}\right)\left(1-\frac{1}{q-z}\right)\\
	=&1+\frac{1}{q-1}-\frac{1}{q-z}-\frac{1}{(q-1)(q-z)}\\
	=&1-\frac{z-2}{(q-1)(q-z)}\\
	\leq &1.
\end{align*} This implies that $\phi(z+1)\leq \phi(z)$ holds for any positive integer $z\in[2:q-1]$. When $z=2$, we have $$\phi(2)=\frac{q(q-2)}{(q-1)^2}=1-\frac{1}{(q-1)^2}<1.$$ So $\phi(z)\leq \phi(2)<1$ holds for each positive integer $z\in[3:q-1]$.
Then the proof is completed. 

\section{Proof of Proposition~\ref{pro-2}}
\label{proof of pro-2} 
We will use induction to prove our statement. Let us first consider the case $m=2$. Then we have
$$\mathcal{C}_{1}=\{1\},\ \mathcal{C}_{2}=\{2\},\ \ldots,\ \mathcal{C}_{q-1}=\{q-1\}.$$
Clearly $$|\mathcal{C}_{1}|=|\mathcal{C}_{2}|=\cdots=|\mathcal{C}_{q-1}|=
\frac{(q-1)^{2-1}+1}{q}=1$$ and $$|\mathcal{C}_{q}|=
\frac{(q-1)^{2-1}-q+1}{q}=0.$$ So our statement holds for the case $m=2$. Let $$\mathcal{C}^{(2)}=\mathcal{C}_{1}\bigcup\mathcal{C}_{2}\bigcup \cdots\bigcup\mathcal{C}_{q-1}\bigcup\mathcal{C}_{q}$$ for the case $m=2$. Now let us consider the case $m=3$ by $\mathcal{C}^{(2)}$.

When $m=3$, for all the $f_2, f_3\in[q-1]$ all the values of $f_2+f_3$ can be represented in the following multi-set.
\begin{align}
&f_2+f_3\in\ \left(\mathcal{C}^{(2)}+1\right)\bigcup \left(\mathcal{C}^{(2)}+2\right)\bigcup \cdots\bigcup\left(\mathcal{C}^{(2)}+q-1\right)\nonumber\\
=	&\{2,3,\ldots,q\}
\bigcup\{3,4,\ldots,1\}\bigcup \cdots\bigcup \{q,1,\ldots,q-2\}\label{eq-case-3}\nonumber\\
=&\mathcal{C}^{(3)}.
\end{align}From \eqref{eq-case-3} we can see that each integer $i\in[q-1]$, $\mathcal{C}^{(2)}+i$ does not contain $i$ but contains integer $q$. That is, each integer $i$ where $i\in[q-1]$ occurs in $\mathcal{C}^{(3)}$ exactly $q-2$ times and $q$ occurs in $\mathcal{C}^{(3)}$ exactly $q-1$ times. By the definition of $\mathcal{C}_{i}$, we have $$|\mathcal{C}_{1}|=\cdots=|\mathcal{C}_{q-1}|=
\frac{(q-1)^{3-1}-1}{q}=
\frac{q^2-2q}{q}=q-2$$ and $$|\mathcal{C}_{q}|=
\frac{(q-1)^{3-1}+q-1}{q}=
\frac{q^2-q}{q}=q-1.$$ So our statement holds for the case $m=3$. Now let us consider the case for any integer $m>3$.

Define the multi-set 
\begin{align*}
\mathcal{C}^{(k)}=&\left(\mathcal{C}^{(k-1)}+1\right)\bigcup \left(\mathcal{C}^{(k-1)}+2\right)\bigcup \cdots\bigcup\left(\mathcal{C}^{(k-1)}+q-1\right)
\end{align*} where $k\geq 4$. Assume that the Proposition \ref{pro-2} holds for each $m=k$ where $k>3$. Now let us consider the case $m=k+1$. Similar to the cases $m=2,3$, since the values of the sum $\sum_{j\in[2:m]}f_j$ can be represented in the following multi-set.
\begin{align*}
\sum_{j\in[2:m]}f_j\in &\ \left(\mathcal{C}^{(k)}+1\right)\bigcup \left(\mathcal{C}^{(k)}+2\right)\bigcup \cdots\bigcup\left(\mathcal{C}^{(k)}+q-1\right)\\
=&\mathcal{C}^{(k+1)} 
\end{align*}
If $k$ is even, by our hypothesis there are exactly $q-1$ different integers $h_i\in [q]$ satisfying that $|\mathcal{C}_{h_i}|=((q-1)^{k-1}+1)/q$ for each $i\in[q-1]$ and exactly one integer $h_q$ satisfying that $|\mathcal{C}_{h_q}|=(((q-1)^{k-1}-q+1)/q$. Without loss of generality, we assume that $|\mathcal{C}_i|=((q-1)^{k-1}+1)/q$ where $i\in[q-1]$ and $|\mathcal{C}_{q}|=((q-1)^{k-1}-q+1)/q$. This implies that each integer $i\in [q-1]$ occurs in $\mathcal{C}^{(k)}$ exactly $((q-1)^{k-1}+1)/q$ times and $q$ occurs in $\mathcal{C}^{(k)}$ exactly $((q-1)^{k-1}-q+1)/q$ times. It is easy to check that for any two integers $i,i'\in[q-1]$, there exists a unique integer $h\in [q]$ satisfying $i=h+i'$. In addition, if $i=i'$ we have $h=q$, otherwise $h<q$ always holds. Then we have that each integer $i\in[q-1]$ occurs in $\mathcal{C}^{(k)}+i'$ where $i'\in[q-1]\setminus\{i\}$ exactly $((q-1)^{k-1}+1)/q$ times and $\mathcal{C}^{(k)}+i$ exactly $((q-1)^{k-1}-q+1)/q$ times; integer $q$ occurs in $\mathcal{C}^{(k)}+i'$ where $i'\in[q-1]$ exactly $((q-1)^{k-1}+1)/q$ times.
Thus, each integer $i\in[q-1]$ occurs exactly
\begin{align*} 
(q-2)\frac{(q-1)^{k-1}+1}{q}+\frac{(q-1)^{k-1}-q+1}{q}=\frac{(q-1)^{k}-1}{q}
\end{align*}times in $\mathcal{C}^{(k+1)}$ and $q$ occurs exactly
\begin{align*} 
(q-1)\frac{(q-1)^{k-1}+1}{q}=\frac{(q-1)^{k}+q-1}{q}
\end{align*}times in $\mathcal{C}^{(k+1)}$. That is $|\mathcal{C}_i|=((q-1)^{k}-1)/q$ where $i\in[q-1]$ and $|\mathcal{C}_{q}|=((q-1)^{k}+q-1)/q$ for the case $m=k+1$.  So our statement holds for the case $m=k+1$ when $m$ is odd.

Similarly, if $k$ is odd,  by our hypothesis there are exactly $q-1$ different integers $h_i\in [q]$ satisfying that $|\mathcal{C}_{h_i}|=((q-1)^{k-1}-1)/q$ for each $i\in[q-1]$ and exactly one integer $h_q$ satisfying that $|\mathcal{C}_{h_q}|=((q-1)^{k-1}+q-1)/q$. Without loss of generality, we assume that $|\mathcal{C}_i|=((q-1)^{k-1}-1)/q$ where $i\in[q-1]$ and $|\mathcal{C}_{q}|=((q-1)^{k-1}+q-1)/q$. Similarly, we can also have that each integer $i\in[q-1]$ occurs in $\mathcal{C}^{(k)}+i'$ where $i'\in[q-1]\setminus\{i\}$ exactly $((q-1)^{k-1}-1)/q$ times and $\mathcal{C}^{(k)}+i$ exactly $((q-1)^{k-1}+q-1)/q$ times; integer $q$ occurs in $\mathcal{C}^{(k)}+i'$ where $i'\in[q-1]$ exactly $((q-1)^{k-1}-1)/q$ times. So each integer $i\in[q-1]$ occurs exactly
\begin{align*}
(q-2)\frac{(q-1)^{k-1}-1}{q}+\frac{(q-1)^{k-1}+q-1}{q}=\frac{(q-1)^{k}+1}{q}
\end{align*}times in $\mathcal{C}^{(k+1)}$ and $q$ occurs exactly
\begin{align*} 
(q-1)\frac{(q-1)^{k-1}-1}{q}=\frac{(q-1)^{k}-q+1}{q}
\end{align*}times in $\mathcal{C}^{(k+1)}$. That is $|\mathcal{C}_i|=((q-1)^{k}+1)/q$ where $i\in[q-1]$ and $|\mathcal{C}_{q}|=((q-1)^{k}-q+1)/q$ for the case $m=k+1$.  So our statement holds for the case $m=k+1$ when $m$ is even. Then the proof is completed.

\bibliographystyle{IEEEtran}
\bibliography{reference}

\begin{thebibliography}{10}
\providecommand{\url}[1]{#1}
\csname url@samestyle\endcsname
\providecommand{\newblock}{\relax}
\providecommand{\bibinfo}[2]{#2}
\providecommand{\BIBentrySTDinterwordspacing}{\spaceskip=0pt\relax}
\providecommand{\BIBentryALTinterwordstretchfactor}{4}
\providecommand{\BIBentryALTinterwordspacing}{\spaceskip=\fontdimen2\font plus
\BIBentryALTinterwordstretchfactor\fontdimen3\font minus
  \fontdimen4\font\relax}
\providecommand{\BIBforeignlanguage}[2]{{%
\expandafter\ifx\csname l@#1\endcsname\relax
\typeout{** WARNING: IEEEtran.bst: No hyphenation pattern has been}%
\typeout{** loaded for the language `#1'. Using the pattern for}%
\typeout{** the default language instead.}%
\else
\language=\csname l@#1\endcsname
\fi
#2}}
\providecommand{\BIBdecl}{\relax}
\BIBdecl

\bibitem{WiOpt'Cheng}
M.~Cheng and Y.~Wu, ``A lower bound on load of coded caching schemes for finite
  subpacketizations,'' in \emph{2022 20th International Symposium on Modeling
  and Optimization in Mobile, Ad hoc, and Wireless Networks (WiOpt)}, 2022, pp.
  233--237.

\bibitem{MN}
M.~A. Maddah-Ali and U.~Niesen, ``Fundamental limits of caching,'' \emph{IEEE
  Trans. Inf. Theory}, vol.~60, no.~5, pp. 2856--2867, May 2014.

\bibitem{WTP}
K.~Wan, D.~Tuninetti, and P.~Piantanida, ``Novel delivery schemes for
  decentralized coded caching in the finite file size regime,'' in \emph{Proc.
  IEEE Int. Conf. on Communications Workshops (ICC Workshops)}, May 2017, pp.
  1183--1188.

\bibitem{YMA2018}
Q.~Yu, M.~A. Maddah-Ali, and A.~S. Avestimehr, ``The exact rate-memory tradeoff
  for caching with uncoded prefetching,'' \emph{IEEE Trans. Inf. Theory},
  vol.~64, no.~2, pp. 1281--1296, Feb. 2018.

\bibitem{YMA2019}
------, ``Characterizing the rate-memory tradeoff in cache networks within a
  factor of 2,'' \emph{IEEE Trans. Inf. Theory}, vol.~65, no.~1, pp. 647--663,
  Jan. 2019.

\bibitem{YCTC}
Q.~Yan, M.~Cheng, X.~Tang, and Q.~Chen, ``On the placement delivery array
  design for centralized coded caching scheme,'' \emph{IEEE Trans. Inf.
  Theory}, vol.~63, no.~9, pp. 5821--5833, Sep. 2017.

\bibitem{SZG}
C.~Shangguan, Y.~Zhang, and G.~Ge, ``Centralized coded caching schemes: A
  hypergraph theoretical approach,'' \emph{IEEE Trans. Inf. Theory}, vol.~64,
  no.~8, pp. 5755--5766, Aug. 2018.

\bibitem{YTCC}
Q.~Yan, X.~Tang, Q.~Chen, and M.~Cheng, ``Placement delivery array design
  through strong edge coloring of bipartite graphs,'' \emph{IEEE Commun.
  Lett.}, vol.~22, no.~2, pp. 236--239, Feb. 2018.

\bibitem{STD}
K.~Shanmugam, A.~M. Tulino, and A.~G. Dimakis, ``Coded caching with linear
  subpacketization is possible using ruzsa-szeméredi graphs,'' in \emph{Proc.
  IEEE Int. Symp. Inf. Theory (ISIT)}, Jun. 2017, pp. 1237--1241.

\bibitem{TR}
L.~Tang and A.~Ramamoorthy, ``Coded caching schemes with reduced
  subpacketization from linear block codes,'' \emph{IEEE Trans. Inf. Theory},
  vol.~64, no.~4, pp. 3099--3120, Apr. 2018.

\bibitem{CLTW}
M.~Cheng, J.~Li, X.~Tang, and R.~Wei, ``Linear coded caching scheme for
  centralized networks,'' \emph{IEEE Trans. Inf. Theory}, vol.~67, no.~3, pp.
  1732--1742, Mar. 2021.

\bibitem{DGWWR}
A.~G. Dimakis, P.~B. Godfrey, Y.~Wu, M.~J. Wainwright, and K.~Ramchandran,
  ``Network coding for distributed storage systems,'' \emph{IEEE Transactions
  on Information Theory}, vol.~56, no.~9, pp. 4539--4551, 2010.

\bibitem{CWZW}
M.~Cheng, J.~Wang, X.~Zhong, and Q.~Wang, ``A framework of constructing
  placement delivery arrays for centralized coded caching,'' \emph{IEEE Trans.
  Inf. Theory}, vol.~67, no.~11, pp. 7121--7131, Nov. 2021.

\bibitem{CJTY}
M.~Cheng, J.~Jiang, X.~Tang, and Q.~Yan, ``Some variant of known coded caching
  schemes with good performance,'' \emph{IEEE Trans. Commun.}, vol.~68, no.~3,
  pp. 1370--1377, Mar. 2020.

\bibitem{ZCW}
X.~Zhong, M.~Cheng, and R.~Wei, ``Coded caching schemes with linear
  subpacketizations,'' \emph{IEEE Trans. Commun.}, vol.~69, no.~6, pp.
  3628--3637, Jun. 2021.

\bibitem{WCWC}
J.~Wang, M.~Cheng, K.~Wan, and G.~Caire, ``Placement delivery array
  construction via cartesian product for coded caching,'' \emph{IEEE Trans.
  Inf. Theory}, vol.~69, no.~12, pp. 7602--7626, Dec. 2023.

\bibitem{XXGL}
M.~Xu, Z.~Xu, G.~Ge, and M.-Q. Liu, ``A rainbow framework for coded caching and
  its applications,'' \emph{IEEE Transactions on Information Theory}, vol.~70,
  no.~3, pp. 1738--1752, 2024.

\bibitem{SCS}
\BIBentryALTinterwordspacing
W.~Song, K.~Cai, and L.~Shi, ``Some new constructions of coded caching schemes
  with reduced subpacketization,'' \emph{CoRR}, vol. abs/1908.06570, 2019.
  [Online]. Available: \url{http://arxiv.org/abs/1908.06570}
\BIBentrySTDinterwordspacing

\bibitem{WCCL}
X.~Wu, M.~Cheng, L.~Chen, C.~Li, and Z.~Shi, ``Design of coded caching schemes
  with linear subpacketizations based on injective arc coloring of regular
  digraphs,'' \emph{IEEE Transactions on Communications}, vol.~71, no.~5, pp.
  2549--2562, 2023.

\bibitem{CHKSM}
H.~H.~S. Chittoor, P.~Krishnan, K.~V.~S. Sree, and B.~Mamillapalli,
  ``Subexponential and linear subpacketization coded caching via projective
  geometry,'' \emph{IEEE Transactions on Information Theory}, vol.~67, no.~9,
  pp. 6193--6222, 2021.

\bibitem{WCYT}
J.~Wang, M.~Cheng, Q.~Yan, and X.~Tang, ``Placement delivery array design for
  coded caching scheme in {D}2{D} networks,'' \emph{IEEE Trans. Commun.},
  vol.~67, no.~5, pp. 3388--3395, May 2019.

\bibitem{KWC}
Y.~Kong, Y.~Wu, and M.~Cheng, ``Combinatorial designs for coded caching on
  hierarchical networks,'' in \emph{2023 IEEE Wireless Communications and
  Networking Conference (WCNC)}, 2023, pp. 1--6.

\bibitem{YWY}
Q.~Yan, M.~Wigger, and S.~Yang, ``Placement delivery array design for
  combination networks with edge caching,'' in \emph{Proc. IEEE Int. Symp. Inf.
  Theory (ISIT)}, Jun. 2018, pp. 1555--1559.

\bibitem{CLZW}
M.~Cheng, Y.~Li, X.~Zhong, and R.~Wei, ``Improved constructions of coded
  caching schemes for combination networks,'' \emph{IEEE Trans. Commun.},
  vol.~68, no.~10, pp. 5965--5975, Oct. 2020.

\bibitem{YYW}
Q.~Yan, S.~Yang, and M.~Wigger, ``Storage-computation-communication tradeoff in
  distributed computing: Fundamental limits and complexity,'' \emph{IEEE
  Transactions on Information Theory}, vol.~68, no.~8, pp. 5496--5512, 2022.

\bibitem{SSRB}
S.~Sasi and B.~S. Rajan, ``Multi-access coded caching scheme with linear
  sub-packetization using pdas,'' \emph{IEEE Transactions on Communications},
  vol.~69, no.~12, pp. 7974--7985, 2021.

\bibitem{RCRB}
C.~Rajput and B.~S. Rajan, ``Improved hotplug caching scheme using pdas,'' in
  \emph{2024 IEEE International Symposium on Information Theory (ISIT)}, 2024,
  pp. 1574--1579.

\bibitem{YWCQC}
T.~Yang, K.~Wan, M.~Cheng, R.~C. Qiu, and G.~Caire, ``Multiple-antenna
  placement delivery array for cache-aided {MISO} systems,'' \emph{IEEE Trans.
  Inf. Theory}, vol.~69, no.~8, pp. 4855--4868, Aug. 2023.

\bibitem{NKPR}
K.~K.~K. Namboodiri, E.~Peter, and B.~S. Rajan, ``Extended placement delivery
  arrays for multi-antenna coded caching scheme,'' \emph{IEEE Transactions on
  Communications}, vol.~71, no.~10, pp. 5647--5660, 2023.

\bibitem{MHMT}
H.~B. Mahmoodi, M.~Salehi, and A.~Tölli, ``Low-complexity multi-antenna coded
  caching using location-aware placement delivery arrays,'' \emph{IEEE
  Transactions on Wireless Communications}, pp. 1--1, 2024.

\bibitem{CJYT}
M.~Cheng, J.~Jiang, Q.~Yan, and X.~Tang, ``Constructions of coded caching
  schemes with flexible memory size,'' \emph{IEEE Trans. Commun.}, vol.~67,
  no.~6, pp. 4166--4176, Jun. 2019.

\bibitem{CWPC}
M.~Cheng, K.~Wan, P.~Elia, and G.~Caire, ``Coded caching schemes for
  multiaccess topologies via combinatorial design,'' \emph{IEEE Transactions on
  Information Theory}, vol.~71, no.~7, pp. 5027--5048, 2025.

\bibitem{BE}
F.~Brunero and P.~Elia, ``Fundamental limits of combinatorial multi-access
  caching,'' \emph{IEEE Trans. Inf. Theory}, vol.~69, no.~2, pp. 1037--1056,
  Feb. 2023.

\end{thebibliography}
\end{document}